\documentclass{article}
\usepackage{amsmath,amssymb,amsfonts}
\usepackage{graphicx}
\usepackage{textcomp}
\usepackage{float}
\usepackage[margin=1in]{geometry}
\usepackage{commath,mathtools}
\usepackage{algorithm}
\usepackage{algorithmic}
\usepackage{subfigure}
\usepackage[english]{babel}
\usepackage[autostyle]{csquotes}
\usepackage{hyperref}
\usepackage{color}
\usepackage{rotating,booktabs}
\usepackage{lscape}
\usepackage[sort,nocompress]{cite}
\usepackage{enumitem}
\usepackage{multirow}
\usepackage{ulem} % Mehrdad added this for strike through
\usepackage{tikz}
\usepackage{comment}
\usetikzlibrary{shapes,arrows,chains}
\usepackage{amsthm}
\usepackage{amssymb}
\newtheorem{theorem}{Theorem}[section]

\newtheorem{lemma}[theorem]{Lemma}
\newtheorem{definition}{Definition}[section]
\usepackage{pifont}% http://ctan.org/pkg/pifont
\newcommand{\cmark}{\ding{52}}%
\newcommand{\xmark}{\ding{56}}%
% add line number in draft if the SIAM system generated PDF doesn't have.
\usepackage[pagewise,mathlines]{lineno}
%\linenumbers
\DeclareFontFamily{U}{matha}{\hyphenchar\font45}
\DeclareFontShape{U}{matha}{m}{n}{
      <5> <6> <7> <8> <9> <10> gen * matha
      <10.95> matha10 <12> <14.4> <17.28> <20.74> <24.88> matha12
      }{}
\DeclareSymbolFont{matha}{U}{matha}{m}{n}

\DeclareMathSymbol{\Lt}{3}{matha}{"CE}
\DeclareMathSymbol{\Gt}{3}{matha}{"CF}

\newcommand{\bbf}{\mathbf{b}}

\newcommand{\xbf}{\mathbf{x}}

\newcommand{\zbf}{\mathbf{z}}
\newcommand{\gbf}{\mathbf{g}}
\newcommand{\ubf}{\mathbf{u}}
\newcommand{\vbf}{\mathbf{v}}
\newcommand{\wbf}{\mathbf{w}}

\newcommand{\reps}{r_{\varepsilon}}
\newcommand{\hatreps}{\hat{r}_{\varepsilon}}

\newcommand{\xk}{\mathbf{x}_k}

\newcommand{\xkp}{\mathbf{x}_{k+1}}
\newcommand{\zkp}{\mathbf{z}_{k+1}}
\newcommand{\ukp}{\mathbf{u}_{k+1}}
\newcommand{\vkp}{\mathbf{v}_{k+1}}

\newcommand{\Ccal}{\mathcal{C}}

\newcommand{\Xcal}{\mathcal{X}}

%% temporary

\newcommand{\gi}{\gbf_i}
\newcommand{\epsk}{\varepsilon_{k}}

\newcommand{\kl}{k_{l}}

\newcommand{\sji}{\mathbf{s}_{j,i}}

\newcommand{\xjp}{\mathbf{x}_{j+1}}
\newcommand{\epsj}{\varepsilon_{j}}
\newcommand{\xhat}{\hat{\mathbf{x}}}

\newcommand{\etol}{\epsilon_{\mathrm{tol}}}

\DeclareMathOperator{\dist}{\mathrm{dist}}
\DeclareMathOperator{\prox}{\mathrm{prox}}
\DeclareMathOperator*{\argmin}{\mathrm{arg\,min}}

\def\BibTeX{{\rm B\kern-.05em{\sc i\kern-.025em b}\kern-.08em
    T\kern-.1667em\lower.7ex\hbox{E}\kern-.125emX}}
\begin{document}
%\title{Low-Dose CT Reconstruction With An Efficient Learnable Descent Algorithm}
\title{Provably Convergent Learned Inexact Descent Algorithm for Low-Dose CT Reconstruction}
\author{Qingchao Zhang, Mehrdad Alvandipour,  Wenjun Xia, Yi Zhang, Xiaojing Ye and Yunmei Chen
\thanks{Q. Zhang and M. Alvandipour contributed equally. Corresponding author: Y. Chen.}
\thanks{Q. Zhang, M. Alvandipour, Y. Chen are with the Department of Mathematics, University of Florida, Gainesville, FL 32611 USA (e-mail: \{qingchaozhang, m.alvandipour, yun\}@ufl.edu).}
\thanks{W. Xia and Y. Zhang are with  with the College of
Computer Science, Sichuan University, Chengdu 610065 China (e-mail:  xwj90620@gmail.com; yzhang@scu.edu.cn).}
\thanks{X. Ye is with the Department of Mathematics and Statistics, Georgia State University, Atlanta, GA 30303 USA
(e-mail: xye@gsu.edu).}}
\maketitle
\begin{abstract}
%Computed Tomography is an standard imaging method for non-invasive lesion detection, testing and other applications. To avoid health risks due to radiation, low-dose CT (LDCT) is employed in the industry which brings noise in the projection image and causes artifacts in the reconstructed image by classic methods.
We propose a provably convergent method, called Efficient Learned Descent Algorithm (ELDA), for low-dose CT (LDCT) reconstruction. ELDA is a highly interpretable neural network architecture with learned parameters and meanwhile retains convergence guarantee as classical optimization algorithms. To improve reconstruction quality, the proposed ELDA also employs a new non-local feature mapping and an associated regularizer. We compare ELDA with several state-of-the-art deep image methods, such as RED-CNN and Learned Primal-Dual, on a set of LDCT reconstruction problems. Numerical experiments demonstrate improvement of reconstruction quality using ELDA with merely 19 layers, suggesting promising performance of ELDA in solution accuracy and parameter efficiency.
\end{abstract}

Key words: Low-dose CT, deep learning, inverse problems, learned descent algorithm, optimization.

\section{Introduction}
\label{sec:introduction}
Computed Tomography (CT) is one of the most widely used imaging technologies for medical diagnosis. CT employs X-ray measurements from different angles to generate cross-sectional images of the human body \cite{houns,cormack1964representation}. As high dosage X-rays can be harmful to human body \cite{cancer1,cancer2, cancer3}, substantial efforts have been devoted to image reconstruction using low-dose CT measurements \cite{dose-reduction1,dose-reduction2,dose-reduction3}. There are two main strategies for dose reduction in CT scans: one is to reduce the number of views, and the other is to reduce the exposure time and the current of X-ray tube \cite{hsieh2019performance}, both of which will introduce various degrees of noise and artifacts and then compromise the subsequent diagnosis. Here we focus on the second type however our method is not specific to a particular scanning mode. We formulate the problem as an optimization problem which will be solved with our Efficient Learned Descent Algorithm (ELDA).

The classic analytical method  to  reconstruct  CT images from projection data, Filtered  Back-Projection (FBP), leads  to  heavy  noise  and  artifacts  in  the  low dose scenario.
%\ye{Generally avoid starting a sentence with conjunction, such as "and".}
The remedy for this problem have been sought from three different perspectives: pre-processing the sinograms \cite{sino-domain1,sino-domain2,sino-domain3}, post-processing the images \cite{image_domain-tipnis2010iterative}, or the hybrid approach with iterative reconstructions that encode prior information into the process \cite{iterative-1,iterative-2,zheng2018pwls}. %\cite{zhang2018sparse,hu2020hybrid}.

%The classic and common method to reconstruct CT images from projection data, is the Filtered back-projection (FBP) algorithm which has an acceptable performance when given the full-dose projection data. However in the low-dose scenario, a naive FBP application, leads to heavy noise and artifacts in the reconstructed image. We can take three different routes to address this problem. One is to focus on sinogram, i.e. preprocessing the projection data\cite{sino-domain1,sino-domain2,sino-domain3}. The other is to concentrate on the post-processing stage and work on the low-quality images reconstructed by various analytical algorithms\cite{image_domain-tipnis2010iterative}. And the third approach is to work in the hybrid domain and use iterative reconstruction methods that incorporate prior information into the process and improve the reconstructed image quality\cite{iterative-1,iterative-2,zheng2018pwls}. %\cite{zhang2018sparse,hu2020hybrid}.

%In the image domain, removing severe streaks artifacts seems to be more difficult, and hindered by lack of distribution models for streak artifacts\cite{hu2020hybrid}. Here the iterative method by Siemens \cite{image_domain-tipnis2010iterative} has shown notable results. And for instance, ALOHA \cite{image_domain-han2016sparse} which uses Fourier domain interpolation methods and transforms the reconstruction in the image domain to a  low  rank  Hankel structured matrix completion problem.

The advent of machine learning methods and its success on various image processing  applications, have naturally led to incorporation of deep models into all of the above  approaches and produced a better performance than analytical methods \cite{shan2019competitive}. For instance,  CNN  methods\cite{CNN1,CNN2,CNN3, xie2020artifact}, that have been applied to sparse view \cite{zhang2018sparse, hu2020hybrid, jin2017deep, xie2020artifact, lee2018deep} and low dose \cite{CNN1,CNN2,CNN3,kang2018deep, yang2018low} data.
It is also applied in projection domain synthesis \cite{lee2018deep, proj_domain-lee2017view}, post processing \cite{CNN4, jin2017deep, CNN3, kang2017deep, xie2020artifact}, and for prior learning in iterative methods \cite{ye2019spultra, chen2018learn, wu2017iterative}.

%The advent of Machine Learning methods and its success on various image processing applications, has naturally led to incorporation of deep models into all of the above approaches. %CT reconstruction methods.
%This is a recent direction of research for CT reconstruction which has produced several promising result. For instance CNN methods \cite{CNN1,CNN2,CNN3} that have demonstrated improvement over traditional algorithms and sometimes excellent performance in artifact removal\cite{xie2020artifact}.

%While many of these efforts are focused on the sparse view problem\cite{zhang2018sparse, hu2020hybrid, jin2017deep, xie2020artifact, lee2018deep}, there are also deep models that address low-dose CT images obtained by reduction of exposure time and current \cite{CNN1,CNN2,CNN3,kang2018deep, yang2018low}. These models can also be grouped into single domain methods that focus on projection domain synthesis \cite{lee2018deep, proj_domain-lee2017view}, or image domain artifact removal \cite{xie2020artifact}, or hybrid approaches such as \cite{hu2020hybrid} that attempt to use information in both projection and image domain for reconstruction.

Recently, a number of learned optimization methods have been proposed and are proven very effective in CT reconstruction problem, as they are able to learn adaptive regularizer which leads to more accurate image reconstruction in a variety of medical imaging applications.
However, existing works model regularizers using convolutional neural networks (CNNs) which only explore local image features. This limits the representation power of deep neural networks and is not suitable for medical imaging applications which demand high image qualities. Moreover, most of existing deep networks for image reconstruction are cast as black-boxes and can be difficult to interpret. Last but not least, deep neural networks for image reconstruction are also criticized for lacking mathematical justifications and convergence guarantee.

%Here we exploit the interpretability and expressiveness of the frame work in \cite{chen2020learnable} to propose our energy descent algorithm with a carefully designed regularizer that allows for extracting non-local features.
In this work, we leverage the framework developed in \cite{chen2020learnable} and propose an improved learned descent algorithm ELDA. It further boosts image reconstruction quality using an adaptive non-local feature regularizer. More importantly, compared to \cite{chen2020learnable}, ELDA is more computationally efficient since the safeguard iterate is only computed when a descent condition fails to hold, which happens rarely due to allowance of inexact gradient computation in our algorithmic design.
As a result, our model retains convergence guarantee and meanwhile also improves reconstruction quality over existing methods.
%And It also improves the PSNR.
%
%Along the way, we demonstrate how more complex regularizers can be embraced by such models. This opens the door to further research for extending and applying this framework to other problems, and absorbing yet more complicated regularizers appropriate for the new tasks.
%
The main contributions of this work are summarized as follows.
\begin{itemize}[leftmargin=*]
    \item We propose an efficient learned descent algorithm {with inexact gradients}, called ELDA, to solve the non-smooth non-convex optimization problem in low-dose CT reconstruction. We provide {comprehensive convergence and iteration complexity analysis} of ELDA.
    %ELDA is In particular, our algorithm differs from that of \cite{chen2020learnable} in that we only do one minimization in each loop unless a checking condition fails. {Our modification} improves the efficiency of the algorithm and helps reducing the shifts between the minimization candidates in each iteration.
    %\item Providing convergence guarantee and complexity analysis of the algorithm

    \item ELDA adopts efficient update scheme which only computes safeguard iterate when the desired descent condition fails to hold, and hence is more computationally economical than LDA developed in \cite{chen2020learnable}. Moreover, ELDA employs sparsity enhancing and non-local smoothing regularization which further improve imaging quality.

    \item We conduct comprehensive experimental analysis of ELDA and compare it to several state-of-the-art deep-learning based methods for LDCT reconstruction.
\end{itemize}
%\ye{Can give a brief outline of the paper here.}

In section \ref{sec:related_works}, we present the related works in the literature that associate with our problem. Then in section \ref{sec: method}, we present our method by first defining our model and each of its components, and then stating the algorithm and details of network training. After that in section \ref{sec:convergence}, we state our lemma and theorem regarding the output of the network. Section \ref{experiment} presents the numerical results including parameter study, ablation study and comparison with other competing algorithms.
%The availability of data and the fact that DNNs can be trained with little knowledge about the prior functions, has led to abundance of such approaches in the recent years. Although they have proved in practice to be superior to previous analytical methods \cite{shan2019competitive}, they suffer from major issues shared by most DNN methods: They are difficult to interpret and lack mathematical justification, and they are prone to overfitting and outliers.

%Neural Networks have be incorporated in different stages of the ct reconstruction, including as pre or post processing methods in \cite{CNN4, jin2017deep, CNN3, kang2017deep} that reduce the problem to denoising. Or for prior learning in iterative methods such as \cite{ye2019spultra, chen2018learn, wu2017iterative}. The latter approach essentially enables the learning of some aspects of the regularizer in the underlying optimization model which has proven to be very effective. Although this learning capacity has improved the priors, but they lack the capacity to learn non-local features. Here we propose our energy descent algorithm with a carfully designed regularizer that allows for extracting non-local features and is also provably convergent. We also improve the PSNR while having less computational cost than the previous works.

\section{Related Works} \label{sec:related_works}

%\textbf{Denoising methods:} \ye{Better not call it "denoising". Unlike denoising in CV or image processing where noise type needs to be specifically defined, in medical imaging it may be appropriate to interpret it as "de-aliasing" or alike.}
A natural application of neural networks in CT reconstruction, has been in noise removal in either the projection domain \cite{lee2018deep, proj_domain-lee2017view} or the image domain \cite{CNN4, jin2017deep, CNN3, kang2017deep, xie2020artifact}. In particular, Residual Encoder-Decoder Convolutional Neural Network (RED-CNN) proposed by Chen et al \cite{CNN4}, is an end-to-end mapping from low-dose CT images to normal dose which uses FBP to get low-dose CT images from projections and restrict the problem to denoising in the image domain.
%Inspired by the power of autoencoders in denoising and the success of deep models in many imaging tasks, they design and experiment with NNs with various depths, and observe that deeper networks do not necessarily lead to better results.
%
And yet another attempt is FBPConvNet by Jin et al \cite{jin2017deep} which is inspired by U-net \cite{ronneberger2015u} and further explores CNN architectures while noting the parallels with the general form of an iterative proximal update.
Model Based Image Reconstruction (MBIR) methods attempt to model CT physics, measurement noise, and image priors in order to achieve higher reconstruction quality in LDCT. Such methods learn the regularizer and are able to improve LDCT reconstruction significantly \cite{iterative-1, zheng2018pwls, chun2017convolutional, ye2019spultra}, however their convergence speed is not optimal \cite{chun2019convolutional}. Later, researchers adopted NNs in other aspects of the algorithm and formed a new class of methods called Iterative Neural Networks (INN). INNs seek to enjoy the best of both world of MBIR and denoising deep neural networks, by employing moderate complexity denoisers for image refining and learning better regularizers \cite{ sun2016deep, chun2019bcd, chun2019momentum,ye2020momentum}.

%These methods choose the model, the structure of the regularization and also the iterative solution for the optimization problem corresponding to CT reconstruction. And when unrolled to a network, they can outperform the classical iterative solutions by learning better regularization while also being more time efficient.

INNs have network architectures that are \emph{inspired} by the optimization model and algorithm and this learning capacity enables them to outperform the classical iterative solutions by learning better regularizer while also being more time efficient.
For example, recently BCD-Net \cite{chun2019bcd} improved the reconstruction accuracy compared to MBIR methods and NN denoisers. It showed that it generalizes better than denoisers such as FBPConvNet which lack MBIR modules, and also its learned transforms help to outperform state-of-the-art MBIR methods. Further research in this area has been devoted to improving time efficiency of the algorithm with the image quality. Recently,\cite{chun2019momentum} proposed Momentum-Net, as the first fast and convergent INN architecture inspired by Block Proximal Extrapolated Gradient method using a Majorizer. % that uses momentum and majorizers with regression NNs.
It also guarantees convergence to a fixed-point while improving MBIR speed, accuracy, and reconstruction quality. Momentum-Net is a general framework for inverse problems and its recent application to LDCT \cite{ye2020momentum} showed it improves image reconstruction accuracy compared to a state-of-the-art noniterative image denoising NN. Convergence guarantee is one of the main challenges in the design of INNs and beside its theoretical value, it is highly desirable in medical applications.
LEARN\cite{chen2018learn} is another model that unrolls an iterative reconstruction scheme while modeling the regularization by a field-of-experts. %They apply their model to sparse-view CT data and demonstrate its superiority to many of the previous models including FBPConvNet and FBP.
And yet another similar attempt is Learned Primal-Dual \cite{adler2018learned} which unfolds a proximal  primal-dual optimization  method where the proximal operator is replaced with a CNN.
%They reduced the number of layers to 10 and applied their model to low-dose CT data exhibiting competitive performance in comparison with FBP, and other deep learning post-processing methods.
Their choice of iterative scheme is primal dual hybrid  gradient (PDHG) which is further modified to benefit from the learning capacity of NNs and then used to solve the TV regularized CT reconstruction problem.

In all of the previous works, the architecture is \emph{only inspired} by the optimization model, and in order to improve their performance they introduce components in the network that does not correspond to steps of the algorithm.
Also, the choice of regularization limits the network to only learn local features and as we will empirically demonstrate, it limits the performance of these networks.
One model that attempts to learn non-local features is MAGIC \cite{xia2020magic}. It is also a deep neural network inspired by a simple iterative reconstruction method, i.e. gradient descent. However MAGIC breaks the correspondence between architecture and algorithm in order to extract non-local features. They manually add a non-local corrector in iteration steps which is only \textit{intuitively} justified, and  does not directly correspond to a modified regularizer in the optimization model.
%Nevertheless, their empirical results are impressive and confirms the intuition behind the non-local feature extractors.

In \cite{chen2020learnable}, a Learned Descent Algorithm (LDA) is developed. The LDA architecture is fully determined by the algorithm and thus the network is fully interpretable.
As interpretability and convergence guarantee is highly desirable in medical imaging, this framework is a promising method for inverse problems such as LDCT reconstruction.
Compared to \cite{chen2020learnable}, the present work proposes a more efficient numerical scheme of LDA, leading to comparable network parameters, lower computational cost, and more stable convergence behavior.
We {achieve} this by developing an efficient learned inexact descent algorithm which only computes the safeguard iterate when a prescribed descent condition fails to hold and thus substantially reduces computational cost in practice. Additionaly, we propose a novel non-local smoothing regularizer that further confirms the heuristics in optimization inspired networks such as MAGIC \cite{xia2020magic} but leads to a fully interpretable network and allows us to provide convergence guarantee of the network.

\section{Method} \label{sec: method}
In this section, we introduce the proposed inexact learned descent algorithm for solving the following  low-dose CT reconstruction model:
%We will reconstruct the full-dose CT images from low-dose projection by an LOA which can be described by a bi-level optimization problem, where the upper-level minimization is described by the loss function $\mathcal{L}$, and the lower-level minimization forms the constraint:
%
\begin{equation}\label{eq:loa}
 \mathbf{x}^{(s)}_{\theta} = \argmin_{\xbf}\, \{\phi(\mathbf{x}; \bbf^{(s)}, \theta) := f(\xbf; \bbf^{(s)}) + r(\xbf; \theta) \},
\end{equation}
where $f$ is the data fidelity term that measures the consistency between the reconstructed image $\mathbf{x}$ and the sinogram measurements $\mathbf{b}$, and $r$ is the regularization that may incorporate prior information of $\mathbf{x}$. The regularization $r(\cdot; \theta)$ is realized as a highly structured DNN with parameter $\theta$ to be learned. The optimal parameter $\theta$ of $r$ is then obtained by minimizing the loss function $\mathcal{L}$, where $\mathcal{L}$ measures the (averaged) difference between $\xbf_{\theta}^{(s)}$-the minimizer of $\phi(\cdot; \bbf^{(s)},\theta)$, and the given ground truth $\hat{\xbf}^{(s)}$ for every $s \in [N]$, where $N$ is the number of training data pairs.
For notation simplicity, we write $f(\xbf)$ and $r(\xbf)$ instead of $f(\xbf; \bbf^{(s)})$ and $r(\xbf; \theta)$ respectively hereafter. %, and also reserve $\|\xbf\|$ for the $l_2$ norm of $\xbf$.
We choose $f(\xbf) = \frac{1}{2} \| A \xbf - \bbf \| ^2$ as the data-fidelity term, where $A$ is the system matrix for CT scanner. However, our proposed method can be readily extended to any smooth but (possibly) nonconvex $f$.

\subsection{Regularization Term in  Model \eqref{eq:loa}}
The regularization term $r$ in \eqref{eq:loa} consists of two parts. One of them enhances the sparsity of the solution under a learned transform and the other one smooths the feature maps non-locally:
\begin{equation}\label{eq:r}
  r(\xbf) := \hat{r}(\xbf) + \lambda  \overline{r}(\xbf),
\end{equation}
where $\lambda$ is a coefficient to balance these two terms which can be learned.

\subsubsection{The sparsity-enhancing regularizer}
\label{subsubsect:local_sparsity}

To enhance the sparsity of $\xbf$
under a learned transform $\gbf$,
we propose to minimize  the $l_{2,1}$ norm of $\gbf(\xbf)$. If $\gbf$ is a differential operator, then the $l_{2,1}$ norm of $\gbf(\xbf)$ reduces to the total variation of $\xbf$.
%(Leveraging the powerful representation ability of deep neural networks and the sparse selection property of $l_1$-type norms, we parameterize a \emph{local} regularizer $\hat{r}$ as the composition of the $l_{2,1}$ norm and a learned feature extraction operator $\gbf(\xbf)$.)
%
That is,
\begin{equation}\label{eq:r_1}
  \hat{r}(\xbf) = \|\gbf(\xbf)\|_{2,1} = \sum_{i = 1}^{m} \|\gbf_i(\xbf)\|,
\end{equation}
 where each $\gbf_i(\xbf) \in \mathbb{R}^d$ can be viewed as a feature descriptor vector at the position $i$, as depicted in Fig. \ref{fig:fold_feature_map} (up).
In our experiments, we simply set the feature extraction operator $\gbf$ to a vanilla $l$-layer CNN with nonlinear activation function $\sigma$ but no bias, as follows:
\begin{equation}\label{eq:g}
  \gbf(\xbf) = \wbf_l * \sigma \cdots \ \sigma ( \wbf_3 * \sigma ( \wbf _2 * \sigma ( \wbf _1 * \xbf ))),
\end{equation}
where $\{\wbf _q \}_{q = 1}^{l}$ denote the convolution weights consisting of $d$ kernels with identical spatial kernel size ($3 \times 3$), and $*$ denotes the convolution operation.
Here, the componentwise activation function $\sigma$ is constructed to be the smoothed rectified linear unit as defined below
\begin{equation}\label{eq:sigma}
\sigma (x) =
\begin{cases}
0, & \mbox{if} \ x \leq -\delta, \\
\frac{1}{4\delta} x^2 + \frac{1}{2} x + \frac{\delta}{4}, & \mbox{if} \ -\delta < x < \delta, \\
x, & \mbox{if} \ x \geq \delta,
\end{cases}
\end{equation}
where the prefixed parameter $\delta$ is set to be $0.001$ in our experiment.
Besides the smooth $\sigma$, each convolution operation of $\gbf$ in \eqref{eq:g} can be viewed as matrix multiplication, which enable $\gbf$ to be differentiable, and $\nabla \gbf$ can be easily obtained by Chain Rule where each $\wbf_q^{\top}$ can be implemented as transposed convolutional operation \cite{dumoulin2016guide}.

As $\hat{r}(\xbf)$ defined in \eqref{eq:r_1} is nonsmooth and nonconvex, we apply the Nesterov's smoothing technique \cite{nesterov2005smooth} to
get the smooth approximation and the detail is given in \cite{chen2020learnable}:
\begin{equation}\label{eq:r_eps_hat}
 \hatreps(\xbf)  = \sum_{i \in I_0} \frac{1}{2\varepsilon}  \|\gbf_i(\xbf)\|^2 + \sum_{i \in I_1} \del[2]{\|\gbf_i(\xbf)\| - \frac{\varepsilon}{2} },
\end{equation}
where
$I_0 = \{ i \in [m] \ \vert \ \|\gbf_i(\xbf)\| \leq \varepsilon \}, \  I_1 = [m] \setminus I_0.
$
Here the parameter $\varepsilon$ controls how close the smoothed $ \hatreps(\xbf) $ is to the original function $\hat{r}(\xbf)$, and
one can readily show that
$
\hatreps(\xbf)\leq \hat{r}(\xbf) \leq \hatreps(\xbf) +\frac{m\varepsilon}{2} $ for all $\xbf$ in $\mathbb{R}^n$.
From \eqref{eq:r_eps_hat} we can also derive $\nabla \hatreps(\xbf)$ to be
$$\label{eq:d_r_eps}
\nabla \hatreps(\xbf) = \sum_{i \in I_0} \nabla \gbf_i(\xbf)^{\top} \frac{\gbf_i(\xbf)}{\varepsilon}   + \sum_{i \in I_1} \nabla \gbf_i(\xbf)^{\top}  \frac{\gbf_i(\xbf)}{\|\gbf_i(\xbf)\|} ,
$$
where $\nabla \gbf_i(\xbf)\in \mathbb{R}^{d \times n}$ is the Jacobian of $\gbf_i$ at $\xbf$.

%The above smooth approximation enables us to \textcolor{red}{apply inexact gradient decent algorithm to smoothed problem as the inner loop, then approach to the solution of the original nonsmooth and nonconvex problem by adaptively reducing the smoothing parameter.
%The convergence analysis will be given in the supplement material.}

%(find a proper algorithm (in Section \ref{subsec:LDA}) to solve the original nonsmooth problem \eqref{eq:loa} with learned parameters. More importantly, it allows for provable asymptotic convergence analysis of the proposed algorithm, where the convergence justification will be presented in Section \ref{sec:convergence}.)
%
\begin{figure}[t]
\centering
\includegraphics[width=0.5\textwidth]{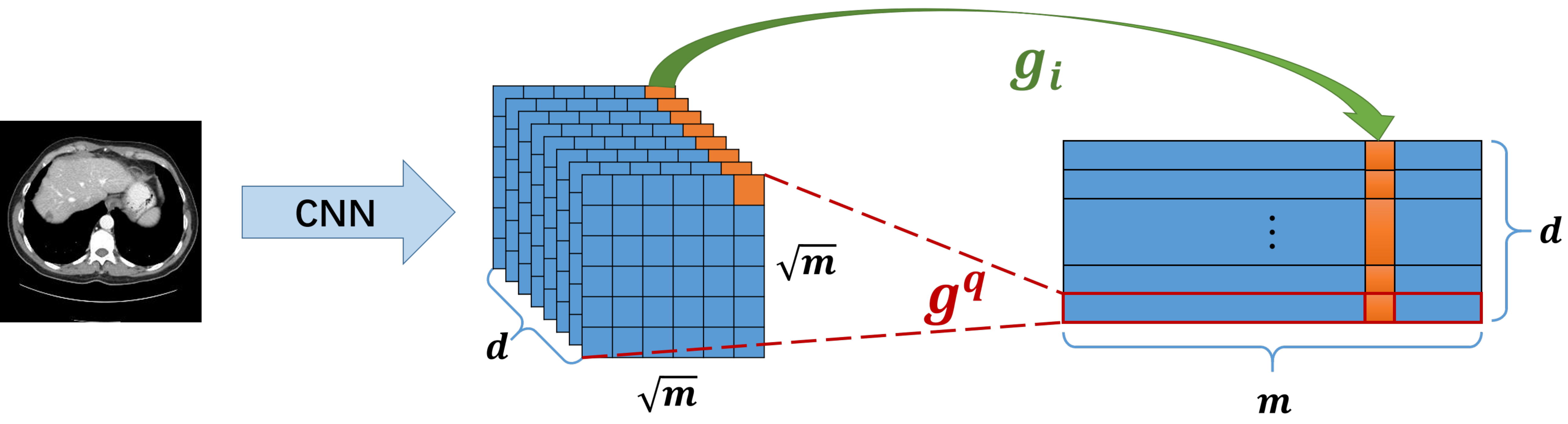}
%\label{fig:feature_map}
\includegraphics[width=0.5\textwidth]{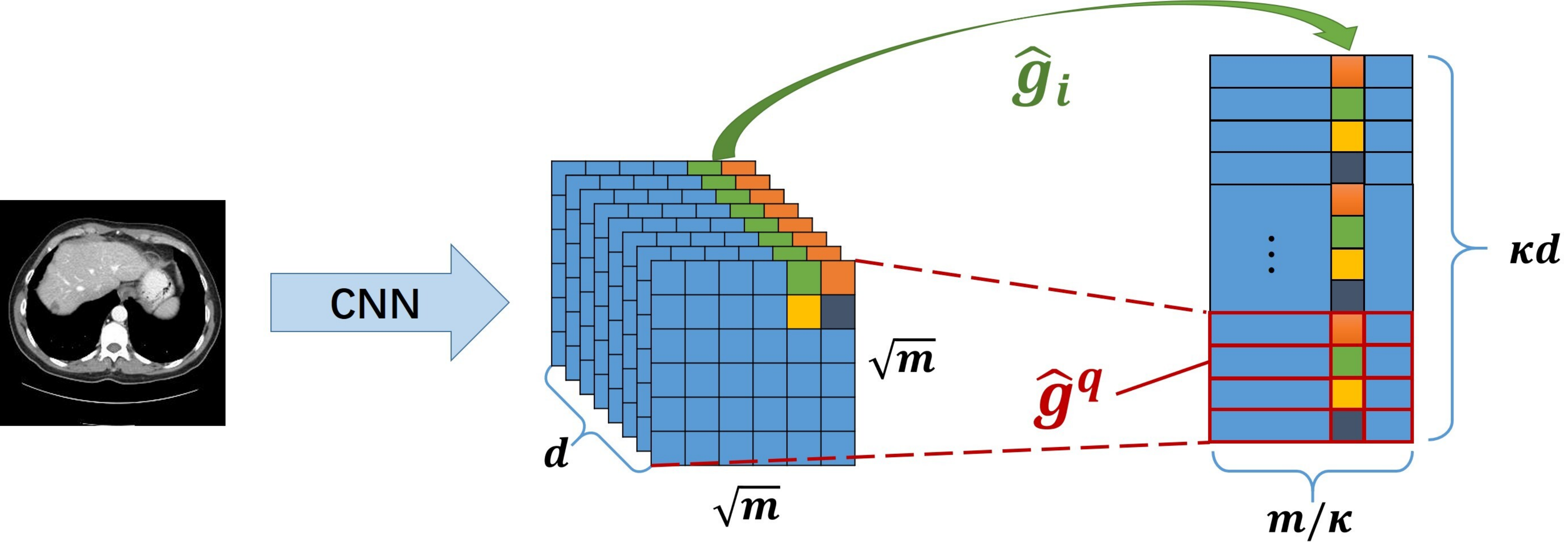}
\caption{The feature matrix $\gbf \in \mathbb{R}^{d \times m}$ (up) and the folded feature matrix $\hat{\gbf} \in \mathbb{R}^{\kappa d \times \frac{m}{\kappa}}$ (bottom) reshaped from the feature maps obtained from the last convolution of the CNN defined in \eqref{eq:g}. The folding rate $\kappa$ is 4 for $\hat{\gbf}$ in this illustration.}
\label{fig:fold_feature_map}
\end{figure}
\subsubsection{The nonlocal smoothing regularizer}
\label{subsubsect:nonlocal_smoothing}

Since convolution operations only extract the local information, each feature descriptor vector $\gbf_i$ can only encode the local features of a small patch of the input $\xbf$ (i.e. receptive field) \cite{LeB17}.
So here we seek to incorporate an additional non-local smoothing regularizer $\overline{r}$  that enables capturing of the underlying long-range dependencies between the \emph{patches} of the feature descriptor vectors.
%So here we seek to incorporate more non-local geometrical features and further investigate the underlying long-range dependencies between the feature descriptor vectors in our model.
%It is mentioned in \cite{peyre2009manifold,osher2017low} that the image patches sample a low-dimensional smooth patch manifold, which inspires us to consider that all feature descriptor vectors $\{\gbf_i\}$ with pathwise features lie in a low-dimensional smooth manifold in $\mathbb{R}^d$. The weighted graph stands for a discrete sampling of a smooth manifold \cite{shuman2013emerging},
%
%More specifically, besides the sparsity enhancing regularizer $\hat{r}$, we introduce another non-local smoothing regularizer
To this end
we form the folded feature descriptor vectors $\{\hat{\gbf}_i\}$ as described in  Fig. \ref{fig:fold_feature_map} (bottom) by folding the adjacent $\kappa$ feature descriptors together, and
define $\overline{r}$ by:

\begin{equation}\label{eq:graph_qua}
  \overline{r}(\xbf)= \sum_{(i,j)} \mathcal{W}_{ij} \|\hat{\gbf}_i(\xbf) - \hat{\gbf}_j(\xbf)\|^2,
\end{equation}
where the similarity matrix $\mathcal{W}$ is defined by $\mathcal{W}_{ij} = \exp{\Big(-\frac{\|\hat{\gbf}_i(\xbf) - \hat{\gbf}_j(\xbf)\|^2}{\delta^2}\Big)},$
%\begin{equation}\label{eq:adjacency_matrix}
%    \mathcal{W}_{ij} = \exp{\Big(-\frac{\|\hat{\gbf}_i(\xbf) - \hat{\gbf}_j(\xbf)\|^2}{\delta^2}\Big)},
%\end{equation}
and $\delta$ is the standard deviation, which is estimated by the median of the Euclidean distances between the folded feature descriptor vectors in the model.

Additionally, $\overline{r}$ can also be written in the quadratic form \cite{6789755} as
$
\overline{r}(\xbf) =  tr(\hat\gbf(\xbf)\mathcal{L} \ \hat\gbf(\xbf)^{\top} ),
$
where $tr()$ is the trace operator, $\mathcal{L}= \mathcal{D} - \mathcal{W}$, and $\mathcal{D}$ is the diagonal matrix with $\mathcal{D}_{ii} = \sum_{j = 1}^{m} \mathcal{W}_{ij}$, and $\mathcal{L}$ is positive semidefinite. And its gradient is computed by
\begin{subequations}
\begin{align*}
\nabla \overline{r}(\xbf) &= 2 \cdot \sum_{(i,j)} \tilde{\mathcal{W}}_{ij} (\nabla \hat\gbf_i(\xbf) - \nabla \hat\gbf_j(\xbf))^\top (\hat\gbf_i(\xbf) - \hat\gbf_j(\xbf)) \\
&= 2 \cdot \sum_{q = 1}^{\kappa d} \nabla\hat\gbf^q(\xbf)^{\top}\tilde{\mathcal{L}} \ \hat\gbf^q(\xbf),  \label{eq:graph_qua_diff_true}
\end{align*}
\end{subequations}
where $\nabla \hat\gbf^q(\xbf)\in \mathbb{R}^{ \frac{m}{\kappa} \times n}$ is the Jacobian of $\hat\gbf^q(\xbf)$. And $\tilde{\mathcal{L}} = \tilde{\mathcal{D}} - \tilde{\mathcal{W}}$, $\tilde{\mathcal{D}}_{ii} = \sum_{j = 1}^{m} \mathcal{\tilde W}_{ij}$ and $\tilde{\mathcal{W}}_{ij} = \mathcal{W}_{ij}(\xbf) (1 - \frac{ \lVert \hat\gbf_i(\xbf) - \hat\gbf_j(\xbf) \rVert^2  }{\delta^2})$.
Each $\hat\gbf^q(\xbf)$ represents the $q$-th row of the folded feature matrix $\hat\gbf(\xbf)$ as illustrated in Fig. \ref{fig:fold_feature_map}.

%\begin{equation}\label{eq:phi}
%  \min_{\xbf} \ \phi(\xbf) := f(\xbf) + r(\xbf),
%\end{equation}

%characterized to be learnable.
%
\subsection{Inexact Learned Descent Algorithm}
\label{subsec:LDA}
Now we present an inexact smoothing gradient descent type algorithm to solve the nonconvex and nonsmooth problem \eqref{eq:loa} with the smooth approximation of $r(\xbf)$ defined by
$
\reps(\xbf) :=  \hatreps(\xbf) + \lambda  \overline{r}(\xbf).
$
The proposed algorithm is shown in Algorithm \ref{alg:lda}. In each iteration $k$, we solve the following smoothed problem \eqref{phi_smooth} with fixed $\varepsilon = \varepsilon_k$ in Line 3-14. And Line 15 is aimed to check and update $\varepsilon_k$ by a reduction  principle.
\begin{equation} \label{phi_smooth}
 \min_{\xbf}\, \{\phi_{\varepsilon}(\mathbf{x}; \bbf^{(s)}, \theta) := f(\xbf; \bbf^{(s)}) + \reps(\xbf; \theta) \}.
\end{equation}

As the regularization term $\reps$ is learned via a deep neural network (DNN), some common issues of the DNN have to be taken into consideration when designing the algorithm, such as gradient exploding and vanishing problem during training \cite{DBLP:journals/corr/HeZR016}.
%
%(To alleviate the gradient vanishing concern, ResNet \cite{ResNet} introduces a residual shortcut to make sure the information is backpropagated directly to the front layers.)
%
Substantial improvement in performance has been achieved by ResNet \cite{ResNet} which introduces residual connections to alleviate these issues.
As in \eqref{phi_smooth} only the second term $\reps(\xbf; \theta)$ is learned, we desire to have individual residual updates for this term in our algorithm.
To this end, we use the first order proximal method to solve the smoothed problem \eqref{phi_smooth} by iterating the following steps
\begin{subequations}
\begin{align}
\zkp &= \xk - \alpha_k \nabla f(\xk),\\
\xkp &= \prox_{\alpha_k r_{\varepsilon_{k}}}(\zkp),  \label{eq:prox}
\end{align}
\end{subequations}
where $\prox_{\alpha r}(\zbf) := \argmin_{\xbf}\ \frac{1}{2\alpha} \| \xbf - \zbf \| ^2 + r(\xbf)$.

From our construction of $r_{\varepsilon_{k}}$, it is hard to get the close-form solution to subproblem in \eqref{eq:prox}. Here we propose to linearize the \enquote{nonsimple} term $r_{\varepsilon_{k}}$ by
\begin{equation}
    \label{eq:r_approx}
\tilde{r}_{\varepsilon_{k}}(\xbf) = r_{\varepsilon_{k}}(\zkp) + \langle \nabla r_{\varepsilon_{k}} (\zbf_{k+1}) , \xbf - \zbf_{k+1}\rangle +  \frac{1}{2\beta_k} \| \xbf - \zbf _{k+1} \| ^2.
\end{equation}
With this approximation, instead of solving \eqref{eq:prox} directly, we update by the following step
\begin{equation}
    \label{eq:uuu}
    \ukp = \prox_{\alpha_k \tilde{r}_{\varepsilon_{k}}}(\zkp),
\end{equation}
which has a closed-form solution giving the residual update
\begin{equation} \label{eq:u_closed}
    \ubf_{k+1} = \zbf_{k+1} - \tau_k \nabla r_{\varepsilon_{k}} (\zbf_{k+1}),
\end{equation}
where $\nabla r_{\varepsilon_{k}}  = \nabla \hat r_{\varepsilon_{k}} + \lambda \nabla \overline{r}$ and $\tau_k = \frac{\alpha_k \beta_k}{\alpha_k + \beta_k}$.

%By the above forward-splitting method, the double-residual type updates can be achieved to avoid the gradient vanishing problem during training.
%
From the optimization perspective, with the approximation in \eqref{eq:r_approx}, if we update $\xkp = \ukp$ then the algorithm can not be guaranteed to converge.
In order to ensure convergence, we check whether $\ukp$ satisfies
\begin{equation} \label{condition:u}
\begin{split}
\| \nabla \phi_{\varepsilon_{k}} (\xbf_k) \| \leq c \| \ukp - \xk \| \ \ \   \mbox{and} \ \ \  \phi_{\varepsilon}(\ukp) - \phi_{\varepsilon}(\xk) \leq - \frac{\iota}{2}\| \ukp - \xk \|^2,
\end{split}
\end{equation}
where $c$ and $\iota$ are prefixed constant numbers.
If the condition \eqref{condition:u} holds, we take $\xkp = \ukp$; otherwise, we take the standard gradient descent $\vbf_{k+1}$ coming from
\begin{multline} \label{eq:v}
\vbf_{k+1} = \argmin_{\xbf} \langle \nabla f(\xbf_{k}) , \xbf - \xbf_{k}\rangle
 + \langle \nabla r_{\varepsilon_{k}} (\xbf_{k}) , \xbf - \xbf_{k}\rangle  +  \frac{1}{2\alpha_k} \| \xbf - \xbf _{k} \| ^2,
\end{multline}
which has the exact solution \begin{equation} \begin{split}
\label{eq:v_closed} \vbf_{k+1} = \xk - \alpha_k \nabla f(\xbf_k) - \alpha_k \nabla r_{\varepsilon_{k}}(\xbf_k).\end{split} \end{equation}

To ensure convergence, we need to find $\alpha_k$ through line search such that $\vkp$ satisfies
%As shown in Section \ref{sec:convergence}, this updating rule guarantees the theoretical convergence of the proposed algorithm if step size $\alpha_k$ satisfies the condition in Lemma \ref{lem:inner}. However, the condition of $\alpha_k$ in Lemma \ref{lem:inner} is hard to check, thus we provide an alternative way by applying the Armijo line search with reduction rate $\rho$ such that $\vkp$ satisfies
%
\begin{equation} \label{condition:v}
\phi_{\epsk}(\vkp) - \phi_{\epsk}(\xbf_k) \le - \tau \| \vkp - \xbf_k\|^2,
\end{equation}
where $\tau$ is a prefixed constant. Lemma \ref{lem:inner} proves the convergence of lines 3--14 Algorithm \ref{alg:lda}, including the termination of its line search (lines 9--13) in finitely many steps.
%.
%The proposed algorithm is summarized in Algorithm \ref{alg:lda}.

Our algorithm is inspired by \cite{chen2020learnable} but modified to become more efficient and suitable for deep neural network.
The first contrast is the number of computations of the two candidates $\ukp$ and $\vkp$.
While \cite{chen2020learnable} computes both candidates at every iteration and then chooses the one that achieves a smaller function value, we propose the criteria above in \eqref{condition:u} for updating $\xkp$,
which potentially
saves extra computational time for calculating the candidate $\vkp$.
In addition, the descending condition in Line 5 of Algorithm \ref{alg:lda} mitigates the frequent alternations between the candidates $\ukp$ and $\vkp$ with the algorithm proceeding, as details shown in Section \ref{sec: Ablation Study}.

\subsubsection*{The learned inexact gradient} To further increase the capacity of the network, we employ the learned transposed convolution operator, i.e. we replace $\wbf_q^{\top}$ by a transposed convolution $\widetilde{\wbf}_{q}$ with relearned weights, where $q$ denotes the index of convolution in \eqref{eq:g}.
To approximately achieve $\widetilde{\wbf}_{q} \approx \wbf_q^{\top}$,
we add the constraint term $ \mathcal{L}_{constraint} = \frac{1}{N_w} \sum_{q = 1}^4   \| \widetilde{\wbf}_{q} - \wbf^{\top}_q \| ^2_F$ to the loss function in training to produce the data-driven transposed convolutions.
Here $N_w$ is the number of parameters in learned transposed convolutions and $\|\cdot\|_F$ is the Frobenius norm.
In effect, the consequence of this modification is only to substitute $\nabla r_{\varepsilon_{k}}$ by the inexact gradient $\widetilde{\nabla r}_{\varepsilon_{k}}$ equipped with learned transpose at Line 4 in Algorithm \ref{alg:lda}.
This can further increase the capacity of the unrolled network while maintaining the convergence property.
%
%In this following text, we denote the algorithm with inexact transpose as ELDA$^+$.
\begin{algorithm}[tbh]
\caption{The Efficient learned Descent Algorithm (ELDA) for the Nonsmooth Nonconvex Problem}
\label{alg:lda}
\begin{algorithmic}[1]
\STATE \textbf{Input:} Initial $\xbf_0$, $0<\rho, \gamma<1$, and $\varepsilon_0,\sigma, c, \iota, \tau>0$. Maximum iteration $K$ or tolerance $\etol>0$.
\FOR{$k=0,1,2,\dots,K$}
\STATE $\zkp = \xk - \alpha_k \nabla f(\xk)$,
\STATE $\ukp = \zkp - \tau_k \nabla r_{\varepsilon_{k}} (\zkp)$, {\small(possibly inexact)}
\IF{ condition \eqref{condition:u} holds} %$\| \nabla \phi_{\varepsilon_{k}} (\xbf_k) \| \leq c \| \ukp - \xk \| \ \mbox{and} \ \phi_{\varepsilon}(\ukp) - \phi_{\varepsilon}(\xk) \leq - \frac{\iota}{2}\| \ukp - \xk \|^2$
\STATE set $\xkp = \ukp$,
\ELSE
\STATE $\vbf_{k+1} = \xk - \alpha_k \nabla f(\xbf_k) - \alpha_k \nabla r_{\varepsilon_{k}}(\xbf_k)$, \label{marker}
\IF{ condition \eqref{condition:v} holds} %$\phi_{\epsk}(\vkp) - \phi_{\epsk}(\xbf_k) \le - \tau \| \vkp - \xbf_k\|^2$
\STATE set $\xkp = \vkp$,
\ELSE
\STATE update $\alpha_k \leftarrow \rho \alpha_k$,
then \textbf{go to}~\ref{marker},
\ENDIF
\ENDIF
\STATE \textbf{if} $\|\nabla \phi_{\varepsilon_k}(\xkp)\| < \sigma \gamma {\varepsilon_k}$,  set $\varepsilon_{k+1}= \gamma {\varepsilon_k}$;  \textbf{otherwise}, set $\varepsilon_{k+1}={\varepsilon_k}$.
\STATE \textbf{if} $\sigma {\varepsilon_k} < \etol$, terminate.
\ENDFOR
\STATE \textbf{Output:} $\xbf_{k+1}$.
\end{algorithmic}
\end{algorithm}
\subsection{Network Training:}
We allow the step sizes $\alpha_k$ and $\tau_k$ to vary in different phases. Moreover, all $\{\alpha_k, \tau_k\}_{k = 1} ^ K$ and initial threshold $\varepsilon_0$ are designed to be learned parameters fitted by data.
Here let $\theta$
%$\Theta=\{A, B, \widetilde{A}, \widetilde{B}, \{\alpha_k, \gamma_k\}_{k = 1} ^ K, \eta\}$
stand for the set of all learned parameters of ELDA which consists of the weights of the convolutions and approximated transposed convolutions, step sizes $\{\alpha_k, \tau_k\}_{k = 1} ^ K$ and threshold $\varepsilon_0$, parameter $\lambda$.
Given $N$ training data pairs $\{(\bbf^{(s)}, \hat{\xbf}^{(s)})\}_{s=1}^{N}$ of the ground truth data $\hat{\xbf}^{(s)}$ and its corresponding measurement $\bbf^{(s)}$, the loss function $\mathcal{L}(\theta)$ is defined to be the sum of the discrepancy loss $\mathcal{L}_{discrepancy}$ and the constraint loss $\mathcal{L}_{constraint}$:
\begin{equation}\label{loss}
\begin{split}
  \mathcal{L}(\theta) = \underbrace{\frac{1}{N}\sum_{s = 1}^N \|\xbf^K_\theta - \hat{\xbf}^{(s)}\|^2}_{\mathcal{L}_{discrepancy}}
  + \underbrace{\frac{\vartheta}{N_w} \sum_{q = 1}^4  \| \widetilde{\wbf}_{q} - \wbf^{\top}_q \| ^2_F }_{\mathcal{L}_{constraint}},
  \end{split}
\end{equation}
where $\mathcal{L}_{discrepancy}$ measures the discrepancy between the ground truth $\hat{\xbf}^{(s)}$ and $\xbf^K_\theta$ which is the output of the $K$-phase network.
Here, the constraint coefficient $\vartheta$ is set to $10^{-2}$ in our experiment.
\section{Convergence Analysis}
\label{sec:convergence}
%In this section, we present the convergence analysis of the proposed Algorithm \ref{alg:lda} (ELDA).
According to the problem we are solving, we make a few assumptions on $f$ and $\gbf$ throughout this section.
%the convergence justification in this section.
\\
(A1): $f$ is differentiable and (possibly) nonconvex, and $\nabla f$ is $L_f$-Lipschitz continuous.
\\
(A2): Every component of $\gbf$ is differentiable and (possibly) nonconvex, $\nabla \gbf$ is $L_g$-Lipschitz continuous, and $\sup_{\xbf \in \Xcal} \| \nabla \gbf(\xbf)\| \leq M $ for some constant $M>0$.
\\
(A3): $\phi$ is coercive, and $\phi^* = \min_{\xbf \in \Xcal} \phi(\xbf) > -\infty$.

With the smoothly differentiable activation $\sigma$ defined in \eqref{eq:sigma} and boundedness of $\sigma'$ as well as the fixed convolution weights in \eqref{eq:g} after training, we can immediately verify that the first two assumptions hold,
%The above assumptions (A1)--(A3) can be readily satisfied for the imaging problem we are solving. %
%The Lipschitz continuity of $\nabla f$ and $\nabla \gbf$ is standard in optimization and most imaging applications;
%with the smoothly differentiable activation $\sigma$ defined in \eqref{eq:sigma} and boundedness of $\sigma'$ as well as the fixed convolution weights in \eqref{eq:g} after training, we can easily derive the smoothness of $\gbf$ and boundedness of $\nabla \gbf$ which obviously satisfies (A2).
and typically in image reconstruction $\phi$ is assumed to be coercive  \cite{chen2020learnable}.
%generally holds in image reconstruction, e.g., the difference of convex (DC) component providing overall image intensity information (e.g., $\|\xbf\|_1$) is contained in the data, and deviation from this value makes $f$ value tend to infinity.

As the objective function in \eqref{eq:loa} is nonsmooth and nonconvex, we utilize the Clarke subdifferential\cite{Clarke83} to characterize the optimality of solutions.
We denote $D(\zbf;\vbf) := \limsup_{\zbf \rightarrow \xbf,\, t \downarrow 0} [(f(\zbf+t \vbf)-f(\zbf))/t]$.
\begin{definition}[Clarke subdifferential]
%\label{def:clarke_subdiff}
Suppose that $f: \mathbb{R}^{n} \to (-\infty,+\infty]$  is locally Lipschitz, the Clarke subdifferential $ \partial f(\xbf)$ of $f$  at $\xbf$  is defined as
\vspace{-5pt}
\begin{equation*}
\partial f(\xbf) := \Big\{\wbf \in \mathbb{R}^{n}\ \vert \ \langle \wbf, \vbf \rangle \leq D(\zbf;\vbf), \forall\, \vbf \in \mathbb{R}^{n} \Big\}.
\end{equation*}
\end{definition}
%\ye{The spacing above is strange.}
\vspace{3.5pt}
\begin{definition}[Clarke stationary point]
\label{def:clarke_cp}
For a locally Lipschitz function $f$, a point $\xbf \in R^n$ is called a Clarke stationary point of $f$ if $0 \in \partial f(\xbf)$.
\end{definition}
\begin{lemma}
The gradient of $\overline{r}$ is Lipschitz continuous.
\end{lemma}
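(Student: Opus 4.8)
The plan is to exploit the fact that the expression for $\nabla \overline{r}$ is a \emph{finite} sum over the pairs $(i,j)$, so it suffices to show that each summand $\tilde{\mathcal{W}}_{ij}(\xbf)\,(\nabla \hat\gbf_i(\xbf) - \nabla \hat\gbf_j(\xbf))^\top(\hat\gbf_i(\xbf) - \hat\gbf_j(\xbf))$ is Lipschitz on $\Xcal$, since a finite sum of Lipschitz maps is Lipschitz. Each summand is a product of several factors, and I would invoke the elementary fact that a product of functions that are each bounded and Lipschitz is again Lipschitz, via the telescoping estimate $\|FG - F'G'\| \le \|F\|\,\|G-G'\| + \|F-F'\|\,\|G'\|$. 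Thus the argument reduces to checking that each factor is bounded and Lipschitz, and then combining.

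Write $\hbf_{ij}(\xbf) := \hat\gbf_i(\xbf) - \hat\gbf_j(\xbf)$. Since $\hat\gbf$ is merely a reshaping (folding) of $\gbf$, it inherits the regularity in (A2): $\|\nabla\hat\gbf_i\| \le M$, so each $\hbf_{ij}$ is $2M$-Lipschitz, while the Jacobian-difference factor $(\nabla\hbf_{ij})^\top = (\nabla\hat\gbf_i - \nabla\hat\gbf_j)^\top$ is bounded by $2M$ and, because $\nabla\gbf$ is $L_g$-Lipschitz, is itself $2L_g$-Lipschitz. This disposes of the matrix factor.

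The remaining and most delicate part is the scalar weight $\tilde{\mathcal{W}}_{ij}(\xbf) = \mathcal{W}_{ij}(\xbf)(1 - \|\hbf_{ij}(\xbf)\|^2/\delta^2)$ paired with the vector $\hbf_{ij}(\xbf)$. Rather than bounding $\hbf_{ij}$ directly (which would force me to restrict to a bounded sublevel set and rely on coercivity (A3)), I would group the weight and the feature-difference vector and write $\tilde{\mathcal{W}}_{ij}(\xbf)\,\hbf_{ij}(\xbf) = \psi(\hbf_{ij}(\xbf))$, where $\psi(\hbf) := e^{-\|\hbf\|^2/\delta^2}(1 - \|\hbf\|^2/\delta^2)\,\hbf$ on $\mathbb{R}^{\kappa d}$. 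Setting $t = \|\hbf\|^2/\delta^2$ and $s(t) = e^{-t}(1-t)$, a one-variable calculus check gives $|s(t)| \le 1$ and $|s'(t)| = |t-2|e^{-t} \le 2$ for all $t \ge 0$, and that $t\mapsto t\,|t-2|e^{-t}$ is bounded on $[0,\infty)$; combined with $\nabla_{\hbf}\psi = s(t)I + 2\delta^{-2}s'(t)\,\hbf\hbf^\top$ these bounds yield $\|\nabla_{\hbf}\psi\| \le C_\psi < \infty$ uniformly. Hence $\psi$ is globally bounded and $C_\psi$-Lipschitz, and since $\hbf_{ij}$ is $2M$-Lipschitz, the composition $\psi(\hbf_{ij}(\cdot))$ is bounded and Lipschitz.

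Combining the two pieces, each summand $(\nabla\hbf_{ij})^\top\,\psi(\hbf_{ij})$ is a product of a bounded Lipschitz matrix factor and a bounded Lipschitz vector factor, hence Lipschitz, and summing over the finitely many pairs $(i,j)$ gives the claim. The main obstacle is exactly the third paragraph: one must control the possibly unbounded feature vectors $\hbf_{ij}$. The key observation making this clean is that the Gaussian weight $\mathcal{W}_{ij}$ decays fast enough to tame both the linear factor $\hbf$ and the polynomial factor $(1-\|\hbf\|^2/\delta^2)$, so that the grouped map $\psi$ has a globally bounded Jacobian. Alternatively, one could restrict to the compact sublevel set in which the iterates remain by (A3) and bound $\gbf$ there by continuity.
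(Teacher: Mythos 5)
Your proof is correct, and it hinges on the same key observation as the paper's own proof --- the Gaussian weight $\mathcal{W}_{ij}$ decays fast enough to dominate every polynomial factor in $\|\hat\gbf_i - \hat\gbf_j\|$ --- but it is organized along a genuinely different route. The paper differentiates each summand of $\nabla\overline{r}$ in its entirety with respect to $\xbf$ (one big product rule) and bounds the resulting derivative; this makes terms like $\nabla^2\ubf$ appear, so strictly speaking it invokes second-order differentiability of $\gbf$, which is more than (A2) ($\nabla\gbf$ Lipschitz with $\|\nabla\gbf\|\le M$) literally grants, and its notation $f(\ubf)$ conflates differentiation in $\ubf$ with differentiation in $\xbf$. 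You instead factor each summand as $(\nabla\hbf_{ij})^\top\,\psi(\hbf_{ij})$, differentiate only the explicit smooth map $\psi(\hbf)=e^{-\|\hbf\|^2/\delta^2}\bigl(1-\|\hbf\|^2/\delta^2\bigr)\hbf$ on $\mathbb{R}^{\kappa d}$ (whose Jacobian you bound uniformly by the one-variable estimates on $s(t)=e^{-t}(1-t)$, which check out), and then assemble the pieces purely by the algebra of bounded Lipschitz maps: composition with the $2M$-Lipschitz $\hbf_{ij}$, the telescoping product estimate, and finiteness of the sum. This buys two things: the Lipschitzness of the matrix factor comes verbatim from (A2) with no second derivatives of $\gbf$ anywhere, and the constants are cleanly traceable; what the paper's direct computation buys is brevity, at the price of the extra smoothness assumption and looser bookkeeping. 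Correctly, neither argument needs coercivity (A3) or a restriction to bounded sublevel sets, precisely because of the shared Gaussian-decay observation, and your closing remark makes that alternative (and why it is unnecessary) explicit.
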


\begin{proof}

From equation 6 in the paper, it follows that
\begin{equation*}
    \nabla_{(\xbf)} \overline{r} = \sum_{(i,j)} 2 \exp{\Big(-\frac{\|\gbf_i(\xbf) - \gbf_j(\xbf)\|^2}{\sigma^2}\Big)} (1 - \frac{ \lVert \gbf_i(\xbf) - \gbf_j(\xbf) \rVert^2  }{\sigma^2}) (\nabla \gbf_i(\xbf) - \nabla \gbf_j(\xbf))^\top (\gbf_i(\xbf) - \gbf_j(\xbf)).
\end{equation*}

We only need to show one of the terms under the sum is Lipschitz, i.e. we need $f(\ubf) = \exp{\Big(-\frac{\| \ubf \|^2}{\sigma^2}\Big)} (1 - \frac{ \lVert \ubf \rVert^2  }{\sigma^2}) (\nabla \ubf )^\top (\ubf)$ to be Lipschitz,
where $\ubf = \gbf_i(\xbf) - \gbf_j(\xbf) \in \mathbb{R}^d$. We show $f(\ubf)$ is Lipschitz by showing its derivative is bounded:
\begin{align*}
    \lVert \nabla f(\ubf) \rVert
    &\leqslant
    \lVert \exp{\Big(-\frac{\| \ubf \|^2}{\sigma^2}\Big)} \big(- \frac{2}{\sigma^2} (\nabla \ubf )^\top (\ubf) \big) \rVert \,
     (1 - \frac{ \lVert \ubf \rVert^2  }{\sigma^2}) \,
    \lVert (\nabla \ubf )^\top \rVert
    \lVert \ubf \rVert
    +
    \exp{\Big(-\frac{\| \ubf \|^2}{\sigma^2}\Big)}  \,
    \lVert \big(- \frac{2}{\sigma^2} (\nabla \ubf )^\top (\ubf) \big) \rVert \,
    \lVert (\nabla \ubf )^\top \rVert
    \lVert \ubf \rVert \\
    &+
    \lvert \exp{\Big(-\frac{\| \ubf \|^2}{\sigma^2}\Big)} \,
    (1 - \frac{ \lVert \ubf \rVert^2  }{\sigma^2}) \rvert  \,
    \lVert (\nabla^2 \ubf ) \rVert
    \lVert \ubf \rVert
    +
    \lvert \exp{\Big(-\frac{\| \ubf \|^2}{\sigma^2}\Big)}
    (1 - \frac{ \lVert \ubf \rVert^2  }{\sigma^2}) \rvert
    \lVert \nabla \ubf \rVert^2.
\end{align*}
Note that $\ubf = \gbf_i(\xbf) - \gbf_j(\xbf) $ is Lipschitz in $\xbf$ since $\sup_{\xbf \in \Xcal} \| \nabla \gbf(\xbf)\| \leq M $, therefore $\lVert \nabla \ubf \rVert \leqslant M$. Also since $\nabla \gbf$ is $L_g$ Lipschitz, we get $\lVert \nabla^2 \ubf \rVert \leqslant 2 L_g$.
%Now using the bounds on $\lVert \nabla \ubf \rVert$ and $\lVert \nabla^2 \ubf \rVert$,
Also, a polynomial in $\| \ubf \|$ times $\exp{\Big(-\frac{\| \ubf \|^2}{\sigma^2}\Big)}$ is bounded so we get:
\begin{equation}
    \lVert \nabla f(\ubf) \rVert \leqslant (4\frac{M^2}{\sigma^2} + L_g + M^2) C.
\end{equation}
where $C$ is some constant that bounds all the instances of polynomial in $\| \ubf \|$ times $\exp{\Big(-\frac{\| \ubf \|^2}{\sigma^2}\Big)}$ occurring above. Therefore $\nabla_{(\xbf)} \overline{r}$ is Lipschitz with constant $L_r = 2m^2(4\frac{M^2}{\sigma^2} + L_g + M^2) C$.
\end{proof}

The following Lemma \ref{lem:inner} considers the case where $\varepsilon$ is a positive constant, which corresponds to an iterative scheme that only executes Lines 3--14 of Algorithm \ref{alg:lda}.

\begin{lemma}\label{lem:inner}
Let $\varepsilon, c, \iota, \tau>0$, $0< \rho <1$ and arbitrary initial $\xbf_0 \in \mathbb{R}^n$. Suppose $\{\xbf_{k}\}$ is the sequence generated by repeating Lines 3--14 of Algorithm \ref{alg:lda} with fixed $\varepsilon_k = \varepsilon$, and $\phi^*:=\min_{\xbf \in \mathbb{R}^n} \phi(\xbf)$. Then $\| \nabla \phi_{\varepsilon}(\xbf_k) \| \to 0$ as $k\to \infty$.
\end{lemma}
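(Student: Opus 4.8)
The plan is to treat the fixed-$\varepsilon$ iteration as a descent method with a safeguard and to show that the per-step decrease of $\phi_\varepsilon$ controls the true gradient norm. First I would record that for fixed $\varepsilon$ the map $\nabla\phi_\varepsilon = \nabla f + \nabla\hatreps + \lambda\nabla\overline{r}$ is globally Lipschitz with some finite constant $L_\varepsilon$: $\nabla f$ is $L_f$-Lipschitz by (A1), $\nabla\overline{r}$ is Lipschitz by the lemma above, and $\nabla\hatreps$ is Lipschitz with a constant of order $M^2/\varepsilon + L_g$ coming from the Nesterov smoothing estimate in \cite{chen2020learnable} (finite because $\varepsilon$ is held fixed). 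The observation that makes the inexactness harmless is that, whatever rule produces $\ukp$ in Line 4 (exact or learned-inexact gradient), the test \eqref{condition:u} is checked against the \emph{true} objective $\phi_\varepsilon$ and the \emph{true} gradient $\nabla\phi_\varepsilon(\xk)$, so if it passes, its two inequalities hold verbatim.

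Next I would show the $\vbf$-branch is well defined, i.e. the backtracking in Lines 9--13 stops after finitely many reductions. Since $\vkp = \xk - \alpha_k\nabla\phi_\varepsilon(\xk)$ is an exact gradient step, the descent lemma gives $\phi_\varepsilon(\vkp) - \phi_\varepsilon(\xk) \le -\alpha_k(1 - \tfrac{L_\varepsilon\alpha_k}{2})\|\nabla\phi_\varepsilon(\xk)\|^2$, while $\|\vkp-\xk\|^2 = \alpha_k^2\|\nabla\phi_\varepsilon(\xk)\|^2$; comparing the two, condition \eqref{condition:v} holds as soon as $\alpha_k \le 2/(2\tau + L_\varepsilon)$. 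Because $\alpha_k$ is shrunk by the factor $\rho\in(0,1)$ each time the test fails, this threshold is reached in finitely many steps, and the accepted step obeys a uniform lower bound $\alpha_k \ge \underline{\alpha} := \rho\cdot 2/(2\tau+L_\varepsilon)>0$. Securing this positive lower bound is the one quantitative point that really matters, since in the $\vbf$-branch the gradient is recovered only through the identity $\|\nabla\phi_\varepsilon(\xk)\| = \alpha_k^{-1}\|\xkp-\xk\|$.

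With both branches under control I would assemble a single sufficient-decrease inequality. If \eqref{condition:u} fires, $\xkp=\ukp$ and $\phi_\varepsilon(\xkp)-\phi_\varepsilon(\xk)\le -\tfrac{\iota}{2}\|\xkp-\xk\|^2$; if instead \eqref{condition:v} fires, $\xkp=\vkp$ and $\phi_\varepsilon(\xkp)-\phi_\varepsilon(\xk)\le -\tau\|\xkp-\xk\|^2$. Hence with $\mu:=\min\{\iota/2,\tau\}>0$ one has $\phi_\varepsilon(\xkp)-\phi_\varepsilon(\xk)\le -\mu\|\xkp-\xk\|^2$ at every iteration, so $\{\phi_\varepsilon(\xk)\}$ is nonincreasing. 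It is bounded below because $\phi_\varepsilon \ge \phi - m\varepsilon/2 \ge \phi^* - m\varepsilon/2$, using the smoothing estimate $\hatreps\le\hat r\le\hatreps+m\varepsilon/2$ together with (A3). Therefore $\phi_\varepsilon(\xk)$ converges, and telescoping the decrease inequality yields $\sum_k\|\xkp-\xk\|^2<\infty$, whence $\|\xkp-\xk\|\to 0$.

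Finally I would convert $\|\xkp-\xk\|\to 0$ into $\|\nabla\phi_\varepsilon(\xk)\|\to 0$ case by case. On iterations where \eqref{condition:u} holds, its first inequality gives directly $\|\nabla\phi_\varepsilon(\xk)\|\le c\|\ukp-\xk\| = c\|\xkp-\xk\|$; on iterations where \eqref{condition:v} holds, the exact-step identity gives $\|\nabla\phi_\varepsilon(\xk)\| = \alpha_k^{-1}\|\xkp-\xk\| \le \underline{\alpha}^{-1}\|\xkp-\xk\|$. Thus in all cases $\|\nabla\phi_\varepsilon(\xk)\| \le \max\{c,\underline{\alpha}^{-1}\}\,\|\xkp-\xk\| \to 0$, which is the claim. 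I expect the main obstacles to be essentially bookkeeping: verifying finiteness of $L_\varepsilon$ under the Nesterov smoothing (easy for fixed $\varepsilon$, but it degenerates like $1/\varepsilon$, which is exactly why the outer loop may only drive $\varepsilon\downarrow 0$ gradually), and pinning down the uniform lower bound $\underline{\alpha}>0$ that the $\vbf$-branch requires so that a vanishing step cannot mask a nonvanishing gradient.
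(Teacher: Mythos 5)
Your proposal is correct and follows essentially the same route as the paper's proof: establish the Lipschitz constant $L_\varepsilon = L_f + \sqrt{m}L_g + M^2/\varepsilon + L_r$ of $\nabla\phi_\varepsilon$, show the backtracking in the $\vbf$-branch terminates once $\alpha_k \le \frac{1}{\tau + L_\varepsilon/2}$, combine the sufficient-decrease inequalities of the two branches, telescope, and bound $\phi_\varepsilon$ below by $\phi^* - \frac{m\varepsilon}{2}$ via the smoothing sandwich. The only difference is cosmetic reorganization (you sum $\|\xbf_{k+1}-\xbf_k\|^2$ and then convert to gradient norms, whereas the paper bounds $\|\nabla\phi_\varepsilon(\xbf_k)\|^2$ by the decrease directly and sums), and your explicit uniform lower bound $\underline{\alpha} = \rho\cdot\frac{1}{\tau + L_\varepsilon/2}$ on the accepted step size actually pins down a point the paper leaves implicit in its constant $C = \max\{\alpha_k^2/\tau,\, 2c^2/\iota\}$, which as written depends on $k$.
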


%\begin{proof}
%In supplementary material.
%\end{proof}
\begin{proof}
In each iteration we compute $\ukp$ by $\zbf_{k+1} - \tau_k \nabla r_{\varepsilon_{k}} (\zbf_{k+1})$ %\eqref{eq:u_closed},
, and put $\xkp = \ukp$ only if the condition $\{ \| \phi_{\varepsilon_{k}} (\xbf_k) \| \leq c \| \ukp - \xk \|$ $and$ $ \phi_{\varepsilon}(\ukp) - \phi_{\varepsilon}(\xk) \leq - \frac{\iota}{2}\| \ukp - \xk \|^2\}$ is satisfied. In this case, we will have $ \phi_{\varepsilon}(\xkp)  \leq \phi_{\varepsilon}(\xk) $.

Otherwise we compute $\vkp = \xk - \alpha_k \nabla f(\xbf_k) - \alpha_k \nabla r_{\varepsilon_{k}}(\xbf_k) = - \alpha_k \nabla \phi_{\varepsilon}(\xkp)$ where $\alpha_k$ is found through line search until the criteria

\begin{equation} \label{eq:criteria8}
    \phi_{\epsk}(\vkp) - \phi_{\epsk}(\xbf_k) \le - \tau \| \vkp - \xbf_k\|^2
\end{equation}
is satisfied and then put $\xkp = \vkp$. In this case,
from Lemma 3.2 in \cite{chen2020learnable} we know that the gradient $\nabla \reps$  is Lipschitz continuous with constant $\sqrt{m} L_g+\frac{M^2}{\varepsilon}$. And $\nabla \overline{r}$ is also Lipschitz continuous with constant $L_r$.
Furthermore, in (A1) we assumed $\nabla f$ is $L_f$-Lipschitz continuous. Hence putting $L_{\varepsilon} = L_f + \sqrt{m} L_g + \frac{M^2}{\varepsilon} + L_r$, we get that $\nabla \phi_{\varepsilon}$ is $L_\varepsilon$-Lipschitz continuous, which implies
\begin{equation}\label{lipF}
  \phi_{\varepsilon}(\vbf_{k+1}) \leq \phi_{\varepsilon}(\xbf_{k}) + \langle\nabla \phi_{\varepsilon}(\xbf_{k}), \vbf_{k+1}-\xbf_{k}\rangle+\frac{L_{\varepsilon}}{2} \|  \vbf_{k+1}-\xbf_{k} \| ^2.
\end{equation}

Also by the optimality condition of $\vbf_{k+1} = \argmin_{\xbf}\ \langle \nabla f(\xbf_{k}) , \xbf - \xbf_{k}\rangle  \\ + \langle \nabla r_{\varepsilon_{k}} (\xbf_{k}) , \xbf - \xbf_{k}\rangle  +  \frac{1}{2\alpha_k} \| \xbf - \xbf _{k} \| ^2,$ we have:
\begin{equation}\label{eq:v_opt}
    \langle\nabla \phi_{\varepsilon}(\xbf_{k}), \vbf_{k+1}-\xbf_{k}\rangle+\frac{1}{2\alpha_k} \|  \vbf_{k+1}-\xbf_{k} \| ^2 = \frac{-1}{2 \alpha_k} \|  \vbf_{k+1}-\xbf_{k} \| ^2 \leq 0.
\end{equation}
From above \eqref{lipF} and \eqref{eq:v_opt}, we can get
\begin{equation}\label{eq:diff}
\phi_{\varepsilon}(\vbf_{k+1})-\phi_{\varepsilon}(\xbf_{k})\leq - \del[2]{\frac{1}{\alpha_k} -\frac{L_{\varepsilon}}{2}} \|  \vbf_{k+1}-\xbf_{k} \|^2.
\end{equation}
Therefore it is enough for $\alpha_k \leqslant \frac{1}{\tau + L_\epsilon /2}$ so that the criteria \eqref{eq:criteria8} is satisfied. This implies that there is a finite $k$ so that $\rho^k \alpha_k \leqslant \frac{1}{\tau + L_\epsilon /2}$ and the line search stops successfully and we can get $\phi_{\varepsilon}(\xbf_{k+1})=\phi_{\varepsilon}(\vbf_{k+1}) \leq \phi_{\varepsilon}(\xbf_{k})$.
Hence in both cases (i) and (ii), we can conclude that $\phi_{\varepsilon}(\xbf_{k+1}) \leq \phi_{\varepsilon}(\xbf_{k})$.

Now from $\vbf_{k+1} = \xk - \alpha_k \nabla \phi_{\varepsilon}(\xbf_k)$ we have
\begin{equation}
    \phi_{\varepsilon}(\vbf_{k+1})-\phi_{\varepsilon}(\xbf_{k}) \leqslant
    - \frac{\tau}{\alpha_k ^ 2} \lVert \nabla \phi_{\varepsilon}(\xbf_k) \rVert^2.
\end{equation}
which by rearranging gives
\begin{equation}\label{eq:diff2}
 \|  \nabla \phi_{\varepsilon} (\xbf_k)\|^2 \leq \frac{\alpha_k^2}{\tau} (\phi_{\varepsilon}(\xbf_{k}) - \phi_{\varepsilon}(\vbf_{k+1})).
\end{equation}
And From $ \| \nabla \phi_{\varepsilon_{k}} (\xbf_k) \| \leq c \| \ukp - \xk \|$ and $ \phi_{\varepsilon}(\ukp) - \phi_{\varepsilon}(\xk) \leq - \frac{\iota }{2}\| \ukp - \xk \|^2 $, we can get
\begin{equation}\label{eq:diffu}
 \|  \nabla \phi_{\varepsilon} (\xbf_k)\|^2 \leq \frac{2 c^2}{\iota} (\phi_{\varepsilon}(\xbf_{k}) - \phi_{\varepsilon}(\ubf_{k+1})).
\end{equation}
Because $\xbf_{k+1}$ equals either $\ubf_{k+1}$ or $\vbf_{k+1}$ depending on the condition on Line 5 and 9 in the Algorithm, from \eqref{eq:diff2} and \eqref{eq:diffu}, we have
 \begin{equation}\label{eq:diff3}
 \|  \nabla \phi_{\varepsilon} (\xbf_k)\|^2 \leq C (\phi_{\varepsilon}(\xbf_{k}) - \phi_{\varepsilon}(\xbf_{k+1})).
\end{equation}
where $C = \max \{ \frac{\alpha_k^2}{\tau}, \frac{2 c^2}{\iota} \}$.

Summing up \eqref{eq:diff3} for $k=0,\dots,K$, we have
\begin{equation}
    \sum_{k=0}^{K} \|  \nabla \phi_{\varepsilon} (\xbf_k) \| ^2 \leq C(\phi_{\varepsilon}(\xbf_0) - \phi_{\varepsilon}(\xbf_{K+1})).
\end{equation}
Combining with the fact $\phi_{\varepsilon}(\xbf) \ge \phi(\xbf)-\frac{m\varepsilon}{2} \ge \phi^* - \frac{m\varepsilon}{2}$, we have
\begin{equation}
    \sum_{k=0}^{K} \|  \nabla \phi_{\varepsilon} (\xbf_k) \| ^2
    \le C(\phi_{\varepsilon}(\xbf_0) - \phi^* + \frac{m\varepsilon}{2}).
\end{equation}
The right hand side is finite, and thus by letting $K\to\infty$ we conclude $\| \nabla \phi_{\varepsilon} (\xbf_k) \| \to 0$, which proves the lemma.
\end{proof}

Note that Lemma \ref{lem:inner} does not rely on the term $\nabla r_{\varepsilon_{k}} (\zbf_{k+1})$ in \eqref{eq:u_closed} and Line 5 in Algorithm \ref{alg:lda}. Thus it will still hold if we change the $\nabla r_{\varepsilon_{k}} (\zbf_{k+1})$ to a generalized learned inexact $\widetilde{\nabla r}_{\varepsilon_{k}}(\zbf_{k+1})$ when updating $\ukp$ as described in Section \ref{subsec:LDA}.
%
%In Lemma \ref{lem:inner}, we consider a simple case when the $\varepsilon$ is a positive constant and show $\| \nabla \phi_{\varepsilon}(\xbf_k) \| \to 0$ as $k\to \infty$.
Next we consider the case where $\varepsilon$ varies in Theorem \ref{theorem_convergence}. More precisely, we focus on the subsequence $\{\xbf_{\kl+1}\}$ which selects the iterates when the reduction criterion in Line 15 is satisfied for $k = \kl$ and $\varepsilon_k$ is reduced.
%Then we can show that every accumulation point of this subsequence is a Clarke stationary point, as shown in the following theorem.
%

\begin{lemma}
\label{lem:phi_decay}
Suppose that the sequence $\{\xbf_k\}$ is generated by Algorithm 1 and any initial $\xbf_0$. Then for any $k\ge 0$ we have
\begin{equation}\label{eq:phi_decay}
    \phi_{\varepsilon_{k+1}}(\xbf_{k+1}) + \frac{m \varepsilon_{k+1}}{2} \le \phi_{\varepsilon_{k}}(\xbf_{k+1}) + \frac{m \varepsilon_{k}}{2} \le \phi_{\varepsilon_{k}}(\xbf_{k}) + \frac{m \varepsilon_{k}}{2}.
\end{equation}
\end{lemma}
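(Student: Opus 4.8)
The statement is a chain of two inequalities, and I would treat them separately after first noting that Line~15 of Algorithm~\ref{alg:lda} always yields $\epskp \le \epsk$ (either $\epskp = \gamma\epsk$ with $0<\gamma<1$, or $\epskp = \epsk$). The plan is to recognize that the rightmost inequality is nothing more than the per-iteration descent already contained in Lemma~\ref{lem:inner}, while the leftmost inequality is a pure monotonicity-in-$\varepsilon$ statement evaluated at the single point $\xkp$.

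For the rightmost inequality I would cancel the common term $\frac{m\epsk}{2}$ from both sides, reducing it to $\phi_{\epsk}(\xkp) \le \phi_{\epsk}(\xk)$. This is exactly the descent property established inside the proof of Lemma~\ref{lem:inner}: in the branch $\xkp = \ukp$ the second part of condition~\eqref{condition:u} forces $\phi_{\epsk}(\ukp) \le \phi_{\epsk}(\xk)$, while in the branch $\xkp = \vkp$ a successful line search with criterion~\eqref{condition:v} gives $\phi_{\epsk}(\vkp) \le \phi_{\epsk}(\xk)$. Since $\varepsilon$ is held fixed at $\epsk$ throughout iteration $k$, this settles the right half with no further work.

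For the leftmost inequality I would observe that $\phi_\varepsilon = f + \hatreps + \lambda\overline{r}$ depends on $\varepsilon$ only through $\hatreps$, so it suffices to show that $g(\varepsilon) := \hatreps(\xkp) + \frac{m\varepsilon}{2}$ is nondecreasing on $(0,\infty)$ and then apply it with $\epskp \le \epsk$. Writing $t_i := \|\gbf_i(\xkp)\|$ and distributing $\frac{m\varepsilon}{2} = \sum_{i=1}^m \frac{\varepsilon}{2}$ over the coordinates, the key observation is that each coordinate contributes
\[
\eta_i(\varepsilon) = \begin{cases} \dfrac{t_i^2}{2\varepsilon} + \dfrac{\varepsilon}{2}, & \varepsilon \ge t_i, \\[4pt] t_i, & \varepsilon < t_i, \end{cases}
\]
where, crucially, the large-norm coordinates (those in $I_1$, i.e. $t_i>\varepsilon$) collapse to the $\varepsilon$-independent value $t_i$ once the additive $\frac{\varepsilon}{2}$ is folded in. I would then verify that $\eta_i$ is continuous at the breakpoint $\varepsilon = t_i$ (both pieces equal $t_i$ there) and that on $\varepsilon \ge t_i$ its derivative $\eta_i'(\varepsilon) = \frac{1}{2}\bigl(1 - \frac{t_i^2}{\varepsilon^2}\bigr) \ge 0$. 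Hence each $\eta_i$, and therefore $g = \sum_i \eta_i$, is nondecreasing, giving $\hat{r}_{\epskp}(\xkp) + \frac{m\epskp}{2} \le \hat{r}_{\epsk}(\xkp) + \frac{m\epsk}{2}$; adding the $\varepsilon$-free quantity $f(\xkp) + \lambda\overline{r}(\xkp)$ to both sides produces the claim.

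The only genuinely delicate point is the monotonicity of $g$: the index partition $I_0/I_1$ itself shifts as $\varepsilon$ crosses each $t_i$, so one cannot differentiate $\hatreps(\xkp)$ in $\varepsilon$ naively. The coordinate-wise decomposition circumvents this, since each $\eta_i$ is analyzed in isolation and shown to be continuous and nondecreasing across its own kink. I expect this monotonicity, rather than the descent half, to be the main obstacle, and the device of pairing the $\frac{m\varepsilon}{2}$ penalty with the smoothing term coordinate by coordinate is what makes it transparent.
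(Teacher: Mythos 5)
Your proposal is correct and follows essentially the same route as the paper: both reduce the right inequality to the fixed-$\varepsilon$ descent already established for Lemma \ref{lem:inner}, and both prove the left inequality via the coordinatewise decomposition $\phi_\varepsilon + \frac{m\varepsilon}{2} = f + \lambda\overline{r} + \sum_i \bigl(r_{\varepsilon,i} + \frac{\varepsilon}{2}\bigr)$ together with monotonicity of $\frac{t^2}{2\varepsilon}+\frac{\varepsilon}{2}$ in $\varepsilon$. The only difference is presentational: where the paper checks three cases according to the position of $\|\gbf_i(\xkp)\|$ relative to $\varepsilon_k$ and $\varepsilon_{k+1}$, you fold the cases into global monotonicity of the single continuous function $\eta_i$, which is a slightly tidier rendition of the same argument.
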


\begin{proof} The proof of this lemma is similar to Lemma 3.4 of \cite{chen2020learnable}. The second inequality is immediate from \eqref{eq:diff3}. So we focus on the first inequality.
For any $\varepsilon>0$ and $\xbf$, denote
\begin{equation}\label{eq:rei}
r_{\varepsilon, i}(\xbf) := \begin{cases}
\frac{1}{2 \varepsilon} \| \gbf_i(\xbf)\|^2, & \mbox{if} \ \|\gbf_i(\xbf)\| \le \varepsilon, \\
\| \gbf_i(\xbf)\| - \frac{\varepsilon}{2}, & \mbox{if} \ \|\gbf_i(\xbf)\| > \varepsilon .
\end{cases}
\end{equation}
Since $\phi_{\varepsilon}(\xbf) = \sum_{i=1}^m r_{\varepsilon,i}(\xbf) + \lambda \overline{r}(\xbf) + f(\xbf)$, to prove the first inequality it suffices to show that
\begin{equation}
    \label{eq:r_decay}
    r_{\varepsilon_{k+1},i}(\xkp) + \frac{\varepsilon_{k+1}}{2} \le r_{\epsk,i}(\xkp) + \frac{\epsk}{2}.
\end{equation}
If $\varepsilon_{k+1} = \varepsilon_{k}$, then the two quantities above are identical and the first inequality holds. Now suppose $\varepsilon_{k+1} = \gamma \varepsilon_{k} < \varepsilon_k$. %
We then consider the relation between $\| \gbf_i(\xbf_{k+1})\|$, $\varepsilon_{k+1}$ and $\varepsilon_k$ in three cases:

\begin{enumerate}
    \item If $\| \gbf_i(\xkp) \| > \varepsilon_{k} > \varepsilon_{k+1}$, then by the definition in \eqref{eq:rei}, there is
\[
r_{\varepsilon_{k+1},i}(\xkp) + \frac{\varepsilon_{k+1}}{2} = \| \gbf_i(\xkp) \| = r_{\varepsilon_{k},i}(\xkp) + \frac{\varepsilon_{k}}{2}.
\]
    \item If $\varepsilon_{k} \ge \| \gbf_i(\xkp) \| > \varepsilon_{k+1}$, then \eqref{eq:rei} implies
\begin{align*}
r_{\varepsilon_{k+1},i}(\xkp) + \frac{\varepsilon_{k+1}}{2} = \| \gi(\xkp) \|
&= \frac{\| \gi(\xkp) \|}{2} + \frac{\| \gi(\xkp) \|}{2}
= \frac{\| \gi(\xkp) \|^2}{2\| \gi(\xkp) \|} + \frac{\| \gi(\xkp) \|}{2} \\
&\leqslant  \frac{\| \gi(\xkp) \|^2}{2\varepsilon_{k}} + \frac{\varepsilon_{k}}{2} = r_{\epsk,i}(\xkp) + \frac{\epsk}{2}.
\end{align*}
The second lines follows from the fact that $\frac{\|\gi(\xkp)\|^2}{2\varepsilon} + \frac{\varepsilon}{2}$---as a function of $\varepsilon$---is non-decreasing for all $\varepsilon \ge \| \gi(\xkp)\| $
    \item If $\varepsilon_{k}  > \varepsilon_{k+1} \ge \| \gbf_i(\xkp) \|$, then again the previous fact and \eqref{eq:rei} imply \eqref{eq:r_decay}.
\end{enumerate}

Therefore, in either of the three cases, \eqref{eq:r_decay} holds and hence
\[
r_{\varepsilon_{k+1}}(\xkp) + \frac{m\varepsilon_{k+1}}{2} = \sum_{i=1}^m \del[2]{ r_{\varepsilon_{k+1},i}(\xkp) + \frac{\varepsilon_{k+1}}{2}} \le \sum_{i=1}^m \del[2]{ r_{\epsk,i}(\xkp) + \frac{\epsk}{2} } = r_{\varepsilon_{k}}(\xkp) + \frac{m\varepsilon_{k}}{2},
\]
which implies the first inequality of \eqref{eq:phi_decay}.
\end{proof}

\begin{theorem} \label{theorem_convergence}
Suppose that $\{\xbf_{k}\}$ is the sequence generated by Algorithm \ref{alg:lda} with any initial $\xbf_0$. Let $\{\xbf_{\kl+1}\}$ be the subsequence where the reduction criterion $\varepsilon_{k+1}= \gamma {\varepsilon_k}$ in line 15 is met for $k=k_l$ and $l=1,2,\dots$. Then, $\{\xbf_{\kl + 1}\}$  has at least one accumulation point, and every accumulation point of $\{\xbf_{\kl + 1}\}$ is a Clarke stationary point of \eqref{eq:loa}.
\end{theorem}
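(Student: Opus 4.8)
The plan is to prove the two assertions---existence of an accumulation point and its Clarke stationarity---after first checking that the subsequence $\{\xbf_{\kl+1}\}$ is infinite. To see that it is infinite, equivalently that $\varepsilon_k\to 0$, I would argue by contradiction: if the reduction in Line 15 were triggered only finitely many times, then $\varepsilon_k\equiv\varepsilon^\ast>0$ for all large $k$ and the scheme would reduce to iterating Lines 3--14 at the fixed smoothing level $\varepsilon^\ast$. Lemma \ref{lem:inner} then yields $\|\nabla\phi_{\varepsilon^\ast}(\xbf_k)\|\to 0$, so for some large $k$ we would have $\|\nabla\phi_{\varepsilon^\ast}(\xbf_{k+1})\|<\sigma\gamma\varepsilon^\ast$, forcing a further reduction and contradicting the assumption. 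Hence reductions occur infinitely often and $\varepsilon_{\kl}\to 0$. For boundedness I would invoke Lemma \ref{lem:phi_decay}, which makes $\phi_{\varepsilon_k}(\xbf_k)+\tfrac{m\varepsilon_k}{2}$ non-increasing, together with the estimate $\phi(\xbf)\le\phi_{\varepsilon}(\xbf)+\tfrac{m\varepsilon}{2}$; these give $\phi(\xbf_k)\le\phi_{\varepsilon_0}(\xbf_0)+\tfrac{m\varepsilon_0}{2}$ for every $k$, so $\{\xbf_k\}$ stays in a fixed sublevel set of $\phi$. By coercivity (A3) this set is bounded, hence $\{\xbf_{\kl+1}\}$ is bounded and Bolzano--Weierstrass supplies at least one accumulation point $\bar{\xbf}$.

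The heart of the proof is showing that any such $\bar{\xbf}$ is Clarke stationary. Passing to a sub-subsequence with $\xbf_{\kl+1}\to\bar{\xbf}$, the reduction criterion in Line 15 gives $\|\nabla\phi_{\varepsilon_{\kl}}(\xbf_{\kl+1})\|<\sigma\gamma\varepsilon_{\kl}\to 0$, so $\nabla\phi_{\varepsilon_{\kl}}(\xbf_{\kl+1})\to 0$. I would then split $\nabla\phi_{\varepsilon}=\nabla f+\nabla\hatreps+\lambda\nabla\overline{r}$. Because $f$ and $\overline{r}$ have continuous gradients, $\nabla f(\xbf_{\kl+1})\to\nabla f(\bar{\xbf})$ and $\nabla\overline{r}(\xbf_{\kl+1})\to\nabla\overline{r}(\bar{\xbf})$. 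Everything therefore hinges on a gradient-consistency claim for the Nesterov smoothing: \emph{every limit point of $\nabla\hatreps(\xbf)$ as $\varepsilon\to 0$ and $\xbf\to\bar{\xbf}$ lies in the Clarke subdifferential of $\hat{r}$ at $\bar{\xbf}$.}

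To establish this claim I would examine the closed form $\nabla\hatreps(\xbf)=\sum_i\nabla\gbf_i(\xbf)^\top\mathbf{d}_i$ index by index, where $\mathbf{d}_i=\gbf_i(\xbf)/\varepsilon$ on $I_0$ and $\mathbf{d}_i=\gbf_i(\xbf)/\|\gbf_i(\xbf)\|$ on $I_1$; in either case $\|\mathbf{d}_i\|\le 1$. For indices with $\gbf_i(\bar{\xbf})\ne 0$, continuity places $i$ in $I_1$ eventually and $\mathbf{d}_i\to\gbf_i(\bar{\xbf})/\|\gbf_i(\bar{\xbf})\|$, the unique subgradient in $\partial\|\gbf_i(\bar{\xbf})\|$. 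For indices with $\gbf_i(\bar{\xbf})=0$, the bound $\|\mathbf{d}_i\|\le 1$ lets me pass to a further subsequence with $\mathbf{d}_i\to\bar{\mathbf{d}}_i$, $\|\bar{\mathbf{d}}_i\|\le 1$, so $\bar{\mathbf{d}}_i\in\overline{B}(0,1)=\partial\|\gbf_i(\bar{\xbf})\|$. Using continuity of the Jacobian $\nabla\gbf_i$ and the Clarke chain rule for the composition of the convex norm with the $C^1$ map $\gbf_i$ (an equality since the norm is regular), each summand converges into $\partial(\|\gbf_i(\cdot)\|)(\bar{\xbf})$, and summing gives the claim. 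The sum rule together with smoothness of $f$ and $\overline{r}$ then yields $\partial\phi(\bar{\xbf})=\nabla f(\bar{\xbf})+\partial\hat{r}(\bar{\xbf})+\lambda\nabla\overline{r}(\bar{\xbf})$, so the limit $0=\lim\nabla\phi_{\varepsilon_{\kl}}(\xbf_{\kl+1})$ satisfies $0\in\partial\phi(\bar{\xbf})$, i.e.\ $\bar{\xbf}$ is a Clarke stationary point of \eqref{eq:loa}.

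The main obstacle is precisely this gradient-consistency step, and within it the bookkeeping at indices where $\gbf_i(\bar{\xbf})=0$: there $\partial\|\gbf_i(\cdot)\|$ becomes set-valued, and I must use both the uniform bound $\|\mathbf{d}_i\|\le 1$ baked into the smoothing and the regularity of the norm to invoke the chain rule with equality. By contrast, the smooth parts ($f$, $\overline{r}$) and the existence/boundedness arguments are comparatively routine.
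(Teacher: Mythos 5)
Your proof is correct and follows essentially the same route as the paper: boundedness of $\{\xbf_k\}$ from Lemma~\ref{lem:phi_decay} plus coercivity (A3), vanishing of $\nabla\phi_{\varepsilon_{\kl}}(\xbf_{\kl+1})$ from the Line-15 criterion, and the split between indices with $\gbf_i(\bar{\xbf})\neq 0$ (where the smoothed terms converge to the unique subgradients) and indices with $\gbf_i(\bar{\xbf})=0$ (where the uniform bound $\|\mathbf{d}_i\|\le 1$ places limit points in the subdifferential); your chain-rule/unit-ball description of $\partial\phi(\bar{\xbf})$ defines the same set as the paper's projection formula \eqref{eq:d_phi_xhat}, since components of $\wbf_i$ orthogonal to the range of $\nabla\gbf_i(\bar{\xbf})$ are annihilated by $\nabla\gbf_i(\bar{\xbf})^{\top}$. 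The only substantive addition is your preliminary contradiction argument via Lemma~\ref{lem:inner} that the reduction criterion fires infinitely often, a point the paper assumes implicitly.
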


\begin{proof}
By the Lemma 4.2 and the definition of Clarke subdifferential, this theorem can be easily proved in the similar way to Theorem 3.6 in \cite{chen2020learnable}.

Due to Lemma \ref{lem:phi_decay} and $\phi(\xbf) \le \phi_{\varepsilon}(\xbf) + \frac{m \varepsilon}{2}$ for all $\varepsilon>0$ and $\xbf \in \Xcal$, we know that
\begin{equation*}
\phi(\xk) \le \phi_{\epsk}(\xk) + \frac{m\epsk}{2} \le \cdots \le \phi_{\varepsilon_0}(\xbf_0) + \frac{m\varepsilon_0}{2} < \infty.
\end{equation*}
Since $\phi$ is coercive, we know that $\{\xk\}$ is bounded. Hence $\{ \xbf_{\kl+1}\}$ is also bounded and has at least one accumulation point.

Note that $\xbf_{\kl+1}$ satisfies the reduction criterion in Line 15 of Algorithm 1, i.e., $\| \nabla \phi_{\varepsilon_{\kl}} (\xbf_{\kl+1}) \| \le \sigma \gamma \varepsilon_{\kl} = \sigma \varepsilon_0 \gamma^{l+1} \to 0$ as $l \to \infty$.
For notation simplicity, let $\{\xjp\}$ denote any convergent subsequence of $\{\xbf_{\kl +1}\}$ and $\varepsilon_j$ the corresponding $\epsk$ used in the iteration to generate $\xjp$. Then there exists $\xhat \in \Xcal$ such that $\xjp \to \xhat$, $\epsj \to 0$, and $\nabla \phi_{\epsj}(\xjp) \to 0$ as $j\to \infty$.

Note that the Clarke subdifferential of $\phi$ at $\xhat$ is given by $\partial \phi(\xbf) = \partial \hat{r}(\xbf) + \lambda \nabla \overline{r}(\xbf) + \nabla f(\xbf)$:
\begin{equation}\label{eq:d_phi_xhat}
\partial \phi(\xhat) = \cbr[2]{\sum_{i \in I_0} \nabla \gi(\xhat)^{\top} \wbf_i + \sum_{ i \in I_1} \nabla \gi(\xhat)^{\top} \frac{\gi(\xhat)}{\| \gi(\xhat) \|} + \lambda \nabla \overline{r}(\xhat) + \nabla f(\xhat) \ \bigg\vert \ \| \Pi(\wbf_i; \Ccal(\nabla \gi(\xhat))) \le 1,\ \forall\, i\in I_0},
\end{equation}
where $I_0 = \{i\in[m]\ \vert \ \|\gi(\xhat)\| = 0 \}$ and $I_1 = [m] \setminus I_0$.
Then we know that there exists $J$ sufficiently large, such that
%\qingchao{Why we put $\frac{1}{2}$ below?}
%\ye{It is not necessarily that $ \|\gi(\xhat)\| \le \| \gi(\xjp) \|$. But since $\gi$ is continuous and $\xjp\to \xhat$ we must have $ \frac{1}{2}\|\gi(\xhat)\| \le \| \gi(\xjp) \|$ for $j>J$. You can choose any number in $(0,1)$.}
\[
\epsj < \frac{1}{2}\min \{ \|\gi(\xhat)\| \ \vert \ i\in I_1\} \le \frac{1}{2} \|\gi(\xhat)\| \le \| \gi(\xjp) \|, \quad \forall\, j\ge J,\quad \forall\, i\in I_1,
\]
where we used the facts that $\min \{ \|\gi(\xhat)\| \ \vert \ i\in I_1\}>0$ and $\epsj \to 0$ in the first inequality, and $\xjp \to \xhat$ and the continuity of $\gi$ for all $i$ in the last inequality.
Furthermore, we denote
\begin{equation*}
\sji := \begin{cases}
\frac{\gi(\xbf_{j+1})}{\epsj}, & \mbox{if}\ \|\gi(\xjp)\| \le \epsj, \\
\frac{\gi(\xbf_{j+1})}{\| \gi(\xjp) \|}, & \mbox{if}\ \|\gi(\xjp)\| > \epsj.
\end{cases}
\end{equation*}
Then we have
\begin{equation}\label{eq:d_phi_epsj}
\nabla \phi_{\epsj}(\xjp) = \sum_{i \in I_0} \nabla \gi(\xjp)^{\top} \sji + \sum_{ i \in I_1} \nabla \gi(\xjp)^{\top} \frac{\gi(\xjp)}{\| \gi(\xjp) \|} + \lambda \nabla \overline{r}(\xjp) + \nabla f(\xjp).
\end{equation}
Comparing \eqref{eq:d_phi_xhat} and \eqref{eq:d_phi_epsj}, we can see that the last two terms on the right hand side of \eqref{eq:d_phi_epsj} converge to those of \eqref{eq:d_phi_xhat}, respectively, due to the facts that $\xjp \to \xhat$ and the the continuity of $\gi,\nabla \gi, + \nabla \overline{r}, \nabla f$. Moreover, noting that $\|\Pi(\sji; \Ccal(\nabla \gi(\xhat)))\| \le \| \sji \| \le 1$, we can see that the first term on the right hand side of \eqref{eq:d_phi_epsj} also converges to the set formed by the first term of \eqref{eq:d_phi_xhat} due to the continuity of $\gi$ and $\nabla \gi$. Hence we know that
\[
\dist( \nabla \phi_{\epsj}(\xjp), \partial \phi(\xhat)) \to 0,
\]
as $j \to 0$. Since $\nabla \phi_{\epsj}(\xjp) \to 0$ and $\partial \phi(\xhat)$ is closed, we conclude that $0 \in \partial \phi(\xhat)$.
\end{proof}

%\begin{proof}
%In supplementary material.
%By the Lemma \ref{lem:inner} and the definition of Clarke subdifferential, this theorem can be easily proved in the similar way to Theorem 3.6 in \cite{chen2020learnable}.
%\end{proof}

From Theorem \ref{theorem_convergence}, we conclude that the output of our network converges to a (local) minimizer of the original regularized problem \eqref{eq:loa}.
\section{Experiments and Results}
\label{experiment}
%To demonstrate the performance of the proposed algorithm and unrolled network,
Here we present our experiments on LDCT image reconstruction problems with various dose levels and compare with existing state-of-the-art algorithms in terms of image quality, run time and the number of parameters etc.
We adopt a warm start training strategy which imitates the iterating of optimization algorithm. More precisely, first we train the network with $K = 3$ phases, where each phase in the network corresponds to an iteration in optimization algorithm. After it converges, we add $2$ more phases and we continue training the $5$-phase network until it converges. We continue adding 2 more phases until there is no noticeable improvement.

As computing the similarity weight matrix $\mathcal{W}$ is high-cost in time and memory, we also experiment with approximation of $\mathcal{W}$ computed on the initial reconstruction $\xbf_0$ without updating in each iteration, i.e.
$
  \mathcal{W}_{ij} \approx \exp{\big(-\frac{\|\hat\gbf_i(\xbf_0) - \hat\gbf_j(\xbf_0)\|^2}{\delta^2}\big)}.
$
%
%
%With this approximation, $\hat{\mathcal{W}}$ and $\hat{\mathcal{L}}$ are going to be fixed matrices once computed
Thus $\overline{r}(\xbf)$ can be differentiated as
$
\nabla \overline{r}(\xbf)= 2 \cdot \sum_{q = 1}^{d} \nabla\hat\gbf^q(\xbf)^{\top}\mathcal{L} \ \hat\gbf^q(\xbf).
$
%
%By comparing \eqref{eq:graph_qua_diff_true} and \eqref{eq:graph_qua_diff} we can see the extra cost of computing the actual gradient is exactly equal to the extra cost of calculating $\tilde{\mathcal{L}}$ vs $\hat{\mathcal{L}}$.
%
In the approximation scenario we compute $\mathcal{L}$ once and use it for all the phases, but in the other case we have an extra $\mathcal{O}(m^2)$ computation in each phase, every time we compute $\nabla \phi$. As shown in the Section \ref{sec: Ablation Study}, this approximation does not exacerbate the network performance much but can increase the running speed.
%\textcolor{blue}{However we also note that \eqref{eq:graph_qua_diff} is not Lipschitz under the assumptions provided in the section \ref{sec:convergence} and our Lemma and Theorem are stated for the case of \eqref{eq:graph_qua_diff_true}. Nonetheless, there are standard techniques to extend our methods such that (14) is also Lipschitz; for instance, constraining $\xbf$ or careful modification of the activation functions, which requires further investigations and we postpone to future research.}

All the experiments are performed on a computer with Intel i7-6700K CPU at 3.40 GHz, 16 GB of memory, and a Nvidia GTX-1080Ti GPU of 11GB graphics card memory, and implemented with the PyTorch toolbox \cite{NEURIPS2019_9015} in Python.
The initial $\xbf_0$ is obtained by FBP algorithm. The spatial kernel size of the convolution and transposed convolution is set to be $3 \times 3$ and the channel number is set to $48$ with layer number $l=4$ as default.
The learned weights of convolutions and transposed convolutions are initialized by Xavier Initializer \cite{Glorot10understandingthe} and the starting $\varepsilon_0$ is initialized to be $0.001$. All the learnable parameters are trained by the Adam Optimizer with $\beta_1=0.9$ and $\beta_2=0.999$. The network is trained with learning rate 1e-4 for $200$ epochs when phase number $K = 3$, followed by $100$ epochs when adding more phases.
%Considering the graphics card memory and image size, we use batch size $2$ when training the network.

We test the performance of ELDA on the \enquote{\textit{2016 NIH-AAPM-Mayo Clinic Low-Dose CT Grand Challenge}} \cite{AAPM} which contains 5936 full-dose CT (FDCT) data from 10 patients, from which we randomly select $500$ images and resize them to the size $256 \times 256$. Then we randomly divide the dataset into $400$ images for training and $100$ for testing. Distance-driven algorithm \cite{de2002distance, de2004distance} is applied to
simulate the projections in fan-beam geometry.
The source-to-rotation center and detector-to-rotation center distances are both set to 250 mm. The physical image region covers 170 mm $\times$ 170 mm. On detector there are 512 detector elements each with width 0.72 mm. There are 1024 projection views in total with the projection angles are evenly distributed over a full scan range.
Similar to \cite{possion_noise_and_gussian}, the simulated noisy transmission measurement $I$ is generated by adding Poisson and electronic noise as
\begin{equation}\label{eq:noise}
I = Possion(I_0 \exp{(-\hat{b})}) + Normal(0, \sigma_e^2),
\end{equation}
where $I_0$ is the incident X-ray intensity and $\sigma_e^2$ is the variance the background electronic noise. And $\hat{b}$ represents the noise-free projection. In this simulation, $I_0$ is set to $1.0 \times 10^6$ for normal dose and $\sigma_e^2$ is prefixed to be 10 for all dose cases.
Then the noisy projection $\bbf$ is calculated by taking the logarithm transformation on $\frac{I_0}{I}$.
In low dose case, in order to investigate the robustness of all compared algorithms, we generated three sets of different low dose projections with $I_0 = 1.0 \times 10 ^5, 5.0 \times 10 ^4 $ and $2.5 \times 10 ^4$ which correspond to $10 \%$, $5 \%$ and $2.5 \%$ of the full dose incident accordingly.
We use peak signal to noise ratio (PSNR) and structureal similarity index measure (SSIM) to evaluate the quality of the reconstructed images.
\subsection{Parameter Study}
\label{sec: Parameter Study}
The regularization term of our model is learned from training samples, yet there are still a few key network hyperparameters need to be set manually. Specifically, we investigate the impacts of some parameters of the architecture, which includes the number of convolutions ($l$), the depth of the convolution kernels ($d$) and the phase number ($K$). The influence of each hyperparameter is sensed by perturbing it with others fixed at $d = 48$, $l = 4$ and $K = 19$. The setting includes all factors/components listed in Table. \ref{tab:Components} and all following results are trained and tested with dose level $10 \%$.

 \textbf{\textit{Depth of the convolution kernels:}} We evaluate the instances of $d = 16, 32, 48$ and $64$. The results are listed in Table. \ref{tab:depth}.
 It is evident that the PSNR score raises with growing depth of the kernels, but the profit gradually reduces. On the contrary, the number of parameters and running time grow greatly in the meantime. %
%To level the network complexity and performance, we choose the depth of the convolution kernels to be $48$ in our LDA prototype.
%
\begin{table}[h]
\centering
\addtolength{\tabcolsep}{-1pt}
\caption{
The reconstruction results associated with different depths of convolution kernels and different number of convolutions in each phase with dose level 10\%.
}
\begin{tabular}{lcccc}
\toprule
\textbf{Depth of conv. kernels} & \textbf{16} & \textbf{32} & \textbf{48} & \textbf{64}\\
\midrule
PSNR (dB) & 47.11 & 47.51 & 47.73 & 47.75 \\
Number of parameters & 14,152 & 55,912 & 125,320 & 222,376 \\
Average testing time (s) & 1.139 & 1.231 & 1.411 & 1.539 \\
\bottomrule
\toprule
\textbf{Number of convolutions} & \textbf{2} & \textbf{3} & \textbf{4} & \textbf{5}\\
\midrule
PSNR (dB) & 47.27 & 47.60 & 47.73 & 47.77\\
Number of parameters & 42,376 & 83,848 & 125,320 & 167,656\\
Average testing time (s) & 1.117 & 1.258 & 1.411 & 1.530 \\
\bottomrule
\end{tabular}
\label{tab:depth}
\end{table}

\begin{figure}[h]
\centering
\includegraphics[width=0.475\textwidth]{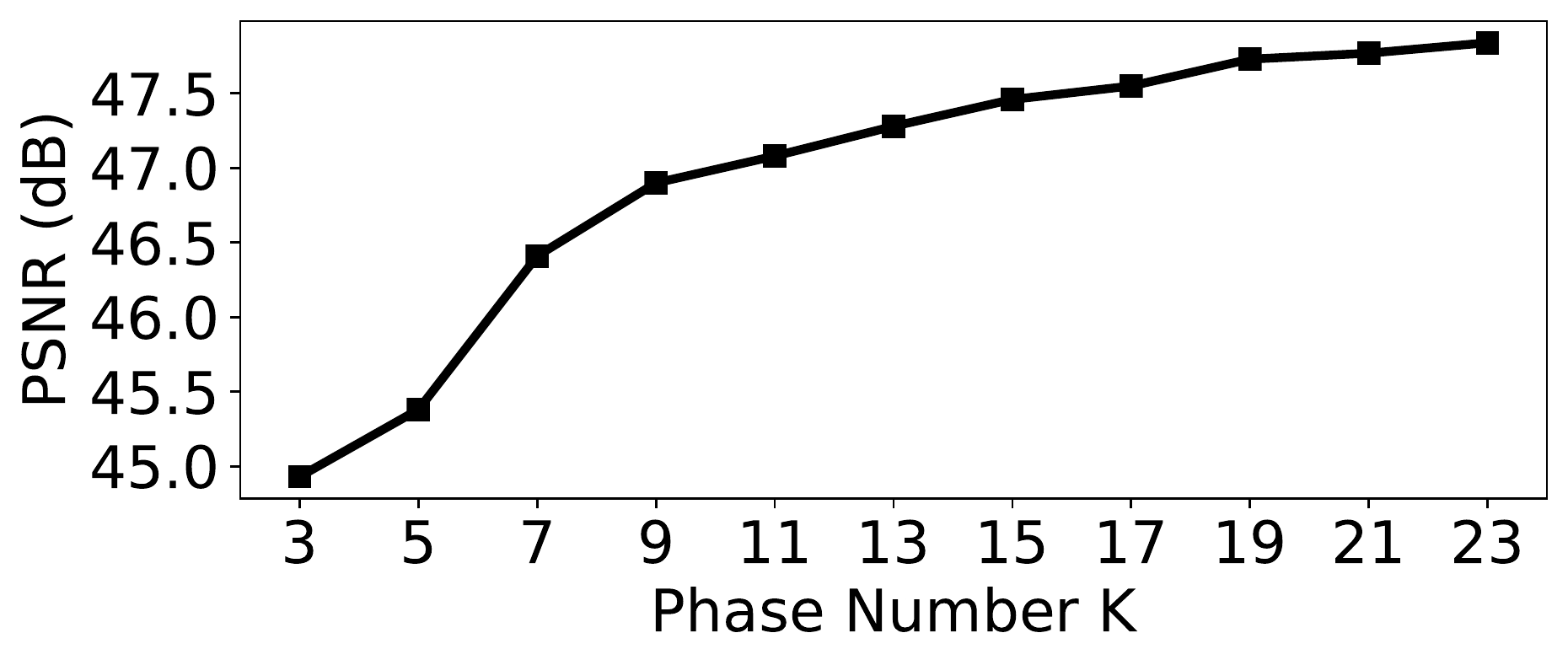}
\caption{The reconstruction PSNR of ELDA across various phase numbers on dose level $10\%$.}
\label{psnr-phase}
\end{figure}
\begin{table}[h]
\centering
\addtolength{\tabcolsep}{-1.25pt}
\caption{
Comparison of different algorithms and the influence of various design factors or components on the LDCT reconstruction performance with dose level 10\%.
}
\begin{tabular}{lcccc}
\toprule
\textbf{Algorithms} & \textbf{Plain-GD} & \textbf{AGD} & \textbf{LDA} & \textbf{ELDA}\\
\midrule
PSNR (dB) & 45.47 & 45.95 & 46.29 & 46.35\\
Average testing time (s) & 0.602 & 0.605 & 1.338 & 1.239 \\
\bottomrule
\toprule
%\textbf{Method} & \multicolumn{4}{c}{\textbf{10\%}}
\multicolumn{5}{c}{\textbf{ELDA Factors / Components}} \\
\midrule
%\textbf{New descending condition?} &  \xmark & \cmark
%& \cmark & \cmark& \cmark\\
Inexact Transpose? &\xmark & \cmark& \cmark& \cmark\\
Non-local regularizer?  & \xmark & \xmark & \cmark& \cmark\\
Approximated Matrix $\mathcal{W}$? & \xmark & \xmark & \xmark& \cmark\\
\midrule
PSNR (dB)& 46.35 & 46.74 & 47.75 & 47.73\\
Average testing time (s) & 1.239 & 1.247 & 3.703 & 1.411 \\
\bottomrule
\end{tabular}
\label{tab:Components}
\end{table}
\textbf{\textit{Number of convolutions:}} We evaluate the cases of different number of convolutions $l = 2, 3, 4$ and $5$. The corresponding results are reported in Table. \ref{tab:depth}. We can observe that more convolutions contribute to better reconstructed image quality. But the increase of PSNR score is insignificant from $4$ convolutions to $5$ while the parameter number and the test time rise significantly. %To balance the network capacity and the parameter/time efficiency, we fix $l = 4$ when comparing ELDA with other state-of-the-art algorithms.

\textbf{\textit{Phase number $\mathbf{K}$:}}
As shown in Fig. \ref{psnr-phase}, the PSNR increases with the phase number $K$. And the plot approaches nearly flat after 19 phases.

To balance the trade-off between reconstruction performance and network complexity, we take $d = 48$, $l = 4$, $K = 19$ when comparing with other methods.
\begin{table*}[h]
\centering
\caption{Quantitative results (Mean $\pm$ Standard Deviation) of the LDCT reconstructions of AAPM-Mayo data obtained by various algorithms and different dose levels. Average run time (per image) is measured in second.}
\label{www}
\small
\addtolength{\tabcolsep}{-6.0pt}
\begin{tabular}{ccccccccc}
\toprule
\multirow{2}{*}{\textbf{Dose Level}} & \multicolumn{2}{c}{\textbf{$1.0 \times 10^5$}} & \multicolumn{2}{c}{\textbf{$5.0 \times 10^4$}} & \multicolumn{2}{c}{\textbf{$2.5 \times 10^4$}} & \multirow{2}{*}{Run Time}& \multirow{2}{*}{Parameters}\\
& PSNR (dB) &  SSIM & PSNR (dB) & SSIM & PSNR (dB) & SSIM & & \\
\midrule
FBP \cite{kak2002principles}& 38.03$\pm$0.68  & 0.9084$\pm$0.0128 &35.25$\pm$0.74 & 0.8459$\pm$0.0209 &32.35$\pm$0.77 &0.7556$\pm$0.0295 & 1.0 & N/A\\
TGV \cite{niu2014sparse} & 43.60$\pm$0.50 & 0.9845$\pm$0.0020& 42.80$\pm$0.52 & 0.9812$\pm$0.0027 & 41.10$\pm$0.62 & 0.9698$\pm$0.0055 & $9.4 \cdot 10 ^{2}$ & N/A\\
FBPConvNet \cite{jin2017deep} & 42.49$\pm$0.47 & 0.9764$\pm$0.0029 & 41.14$\pm$0.49 & 0.9698$\pm$0.0041 & 39.62$\pm$0.48 & 0.9589$\pm$0.0052 & $7.4 \cdot 10^{-2}$ & $1.0 \cdot 10 ^7$\\
RED-CNN \cite{CNN4} & 44.58$\pm$0.58 &0.9848$\pm$0.0026 &43.25$\pm$0.60 & 0.9808$\pm$0.0033 & 41.61$\pm$0.63 &0.9737$\pm$0.0048 &$\mathbf{2.6 \cdot 10 ^{-3}}$ & $1.8 \cdot 10 ^6$\\
Learned PD \cite{adler2018learned} & 44.81$\pm$0.59 &0.9863$\pm$0.0025 &43.28$\pm$0.61 & 0.9813$\pm$0.0034 & 41.74$\pm$0.62 & 0.9745$\pm$0.0048 &1.8 & $2.5 \cdot 10 ^5$\\
LEARN \cite{chen2018learn}& 45.19$\pm$0.60 & 0.9868$\pm$0.0023 &43.86$\pm$0.61 & 0.9840$\pm$0.0027 & 42.06$\pm$0.61 &0.9776$\pm$0.0037 &4.8 & $1.9 \cdot 10 ^6$\\
LDA \cite{chen2020learnable}& 46.29$\pm$0.60 &0.9896$\pm$0.0019 & 44.64$\pm$0.61 &0.9858$\pm$0.0025 & 42.97$\pm$0.64 &0.9802$\pm$0.0035 & 1.3& $\mathbf{6.2 \cdot 10 ^4}$\\
MAGIC \cite{xia2020magic} & 47.54$\pm$0.66 & 0.9919$\pm$0.0017 & 45.83$\pm$0.64 &0.9888$\pm$0.0022 & 44.37$\pm$0.63 & 0.9853$\pm$0.0027 &5.3 & $2.1 \cdot 10 ^6$\\
\textbf{ELDA (Ours)} & \textbf{47.73$\pm$0.69} & \textbf{0.9923$\pm$0.0017}& \textbf{46.40$\pm$0.64}& \textbf{0.9899$\pm$0.0021} & \textbf{44.86$\pm$0.65 }&\textbf{0.9859$\pm$0.0031} & 1.4&$1.2 \cdot 10 ^5$ \\
\bottomrule
\end{tabular}
\end{table*}
\begin{figure*}[h]
\centering
\subfigure[Reference ]{\includegraphics[width=0.1925\textwidth]{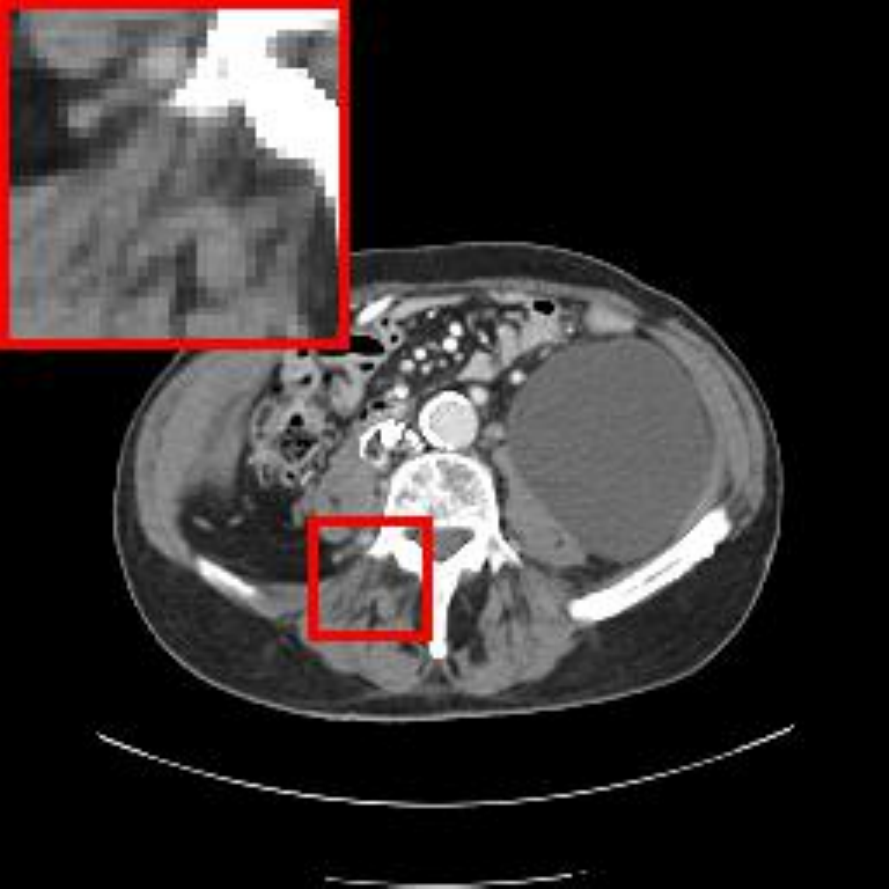}}
\subfigure[FBP (36.20)]{\includegraphics[width=0.1925\textwidth]{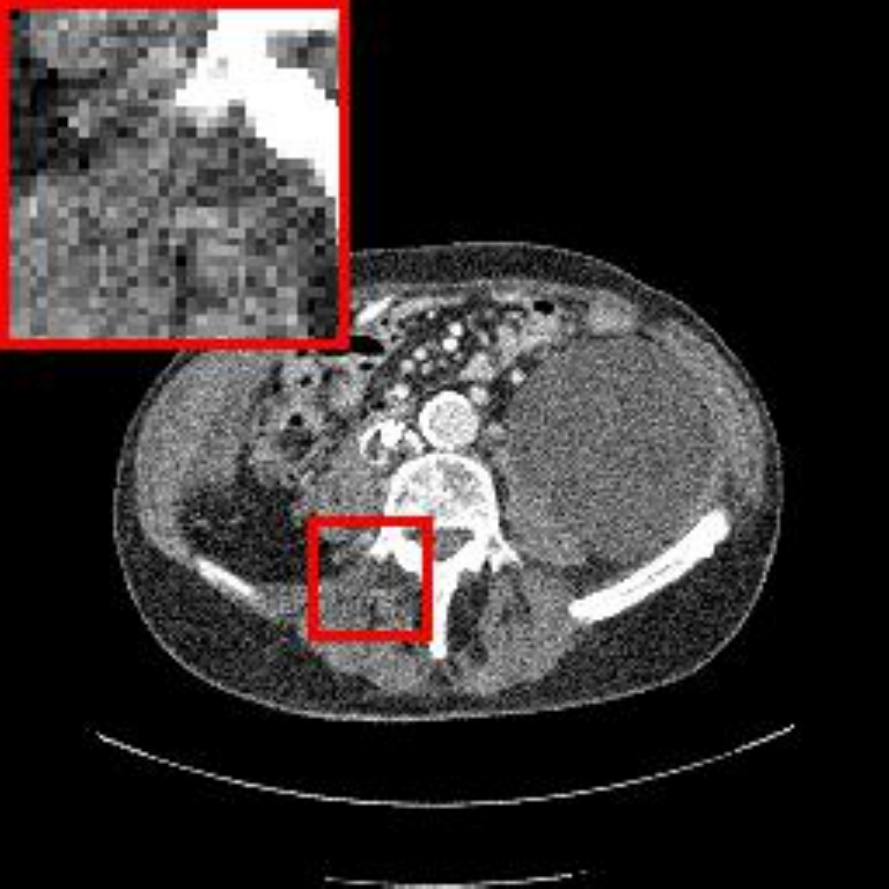}}
\subfigure[TGV (43.40)]{\includegraphics[width=0.1925\textwidth]{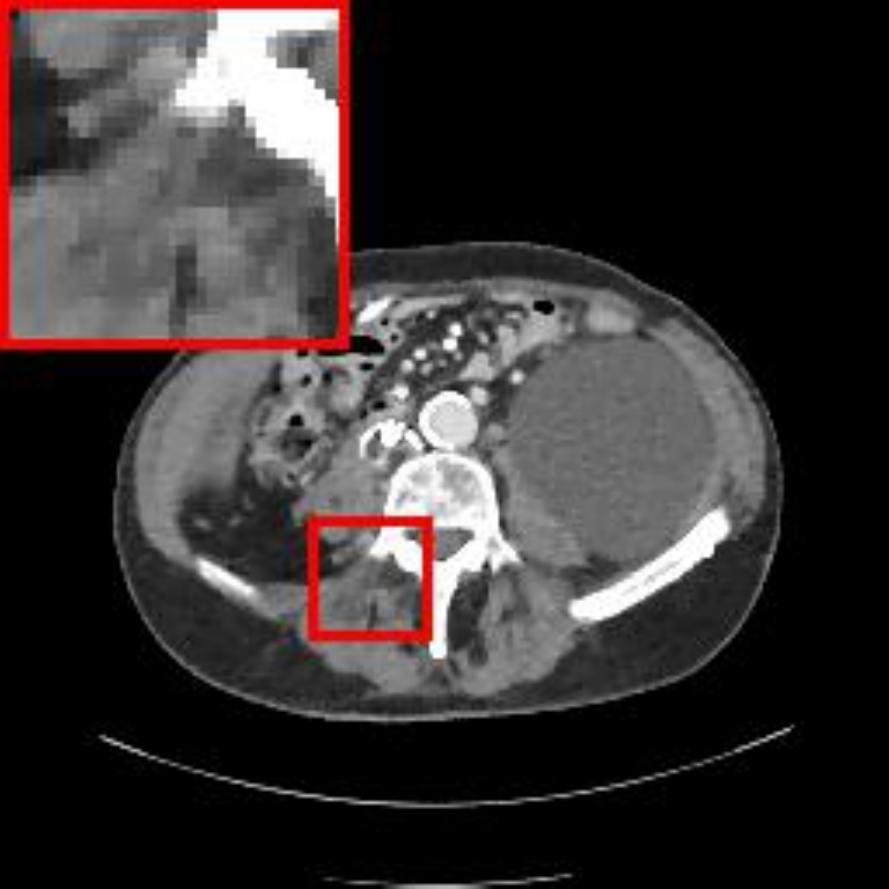}}
\subfigure[FBPConvNet (41.65)]{\includegraphics[width=0.1925\textwidth]{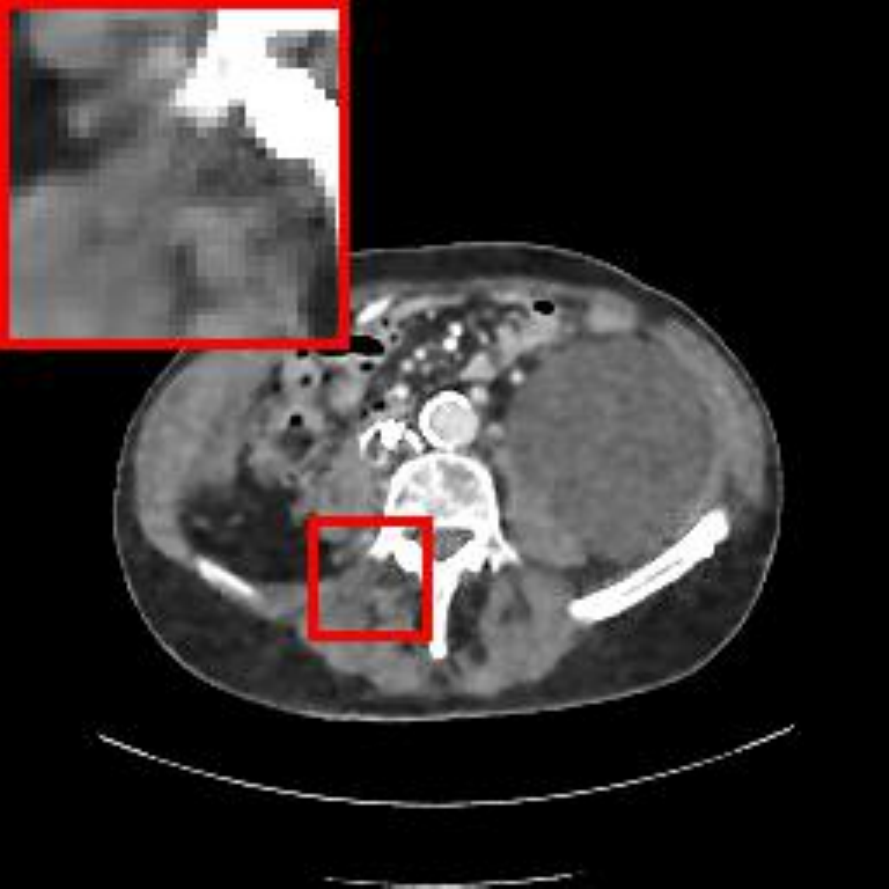}}
\subfigure[RED-CNN (43.84)]{\includegraphics[width=0.1925\textwidth]{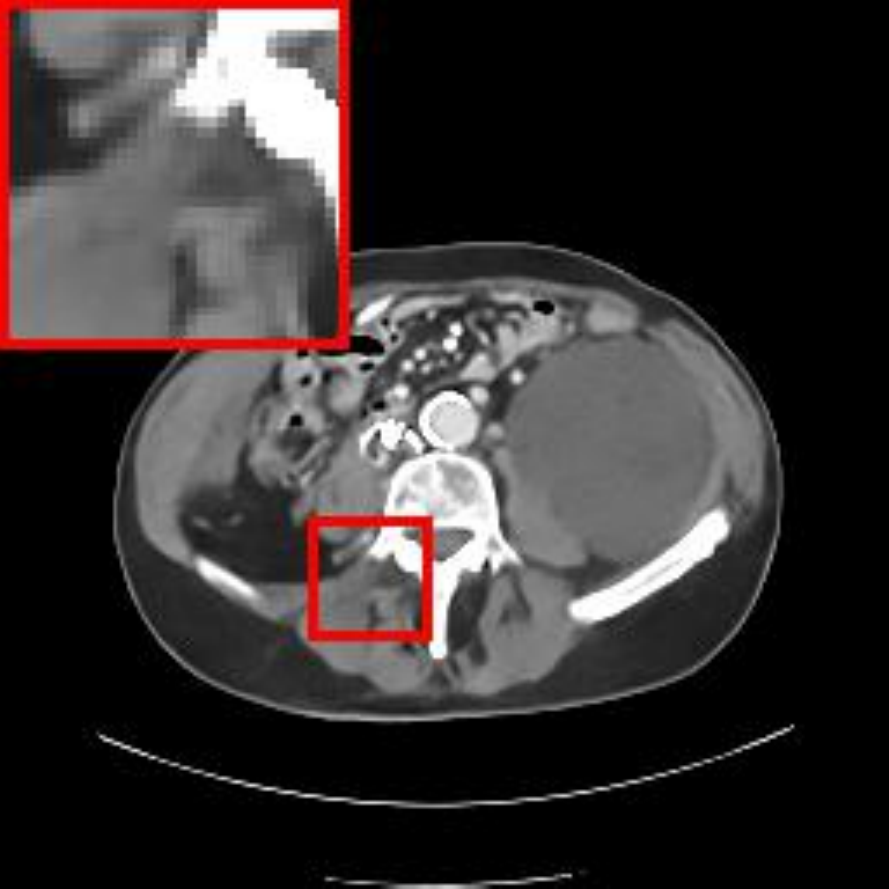}}
\subfigure[Learned PD (43.90)]{\includegraphics[width=0.1925\textwidth]{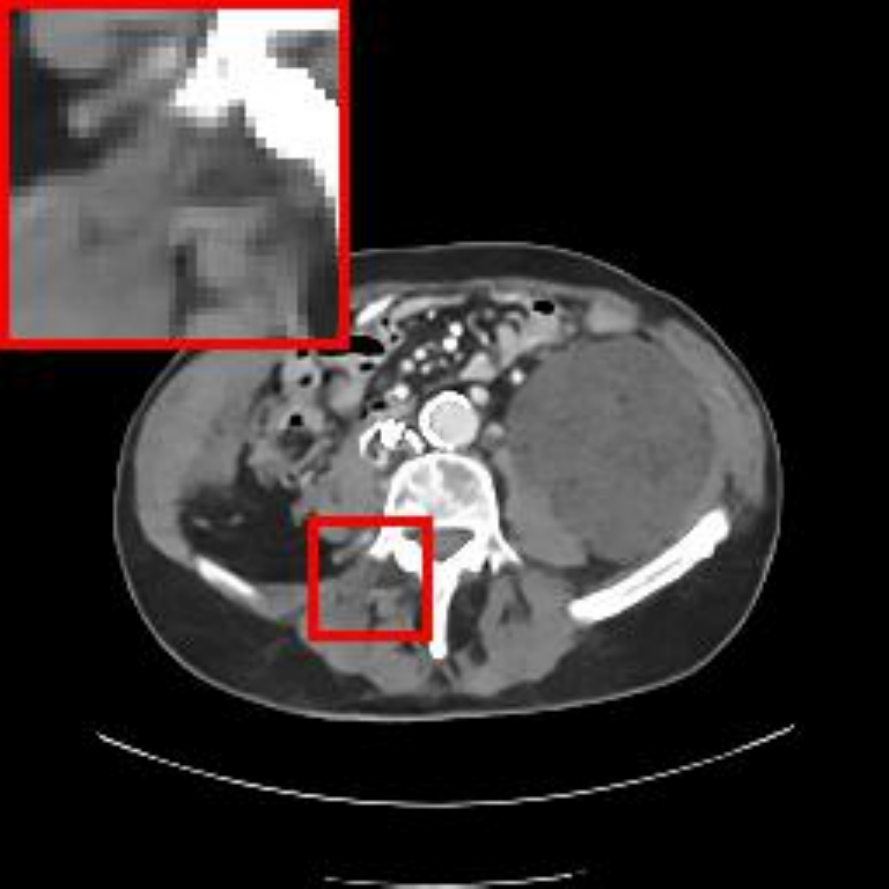}}
\subfigure[LEARN (44.31)]{\includegraphics[width=0.1925\textwidth]{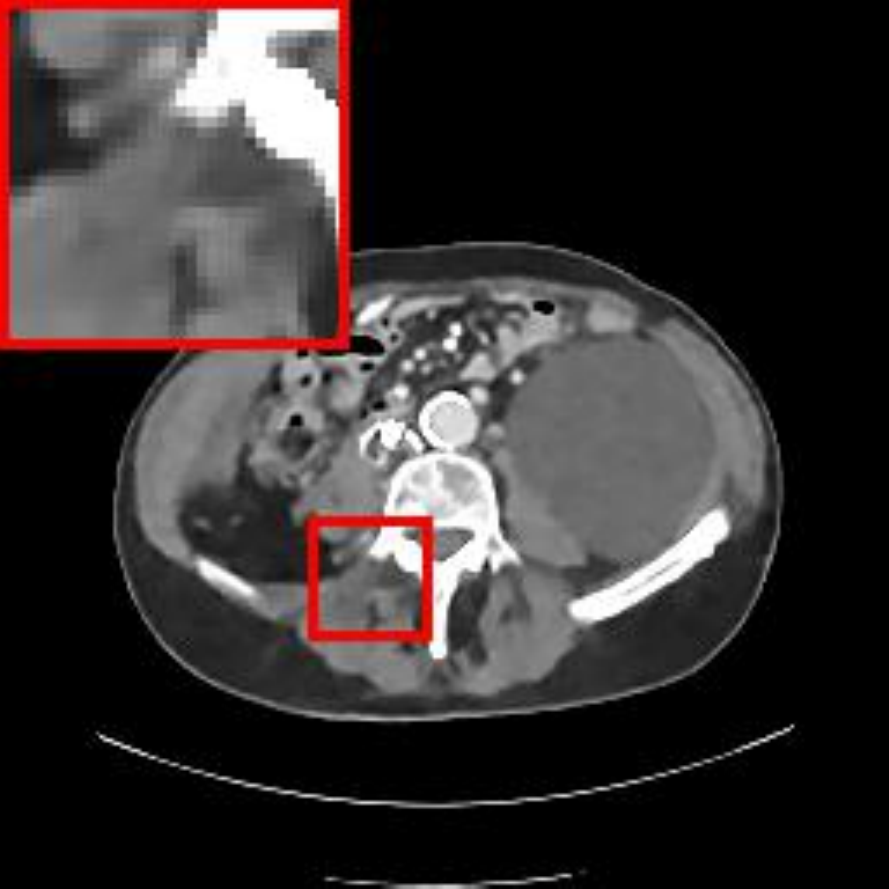}}
\subfigure[LDA (45.01)]{\includegraphics[width=0.1925\textwidth]{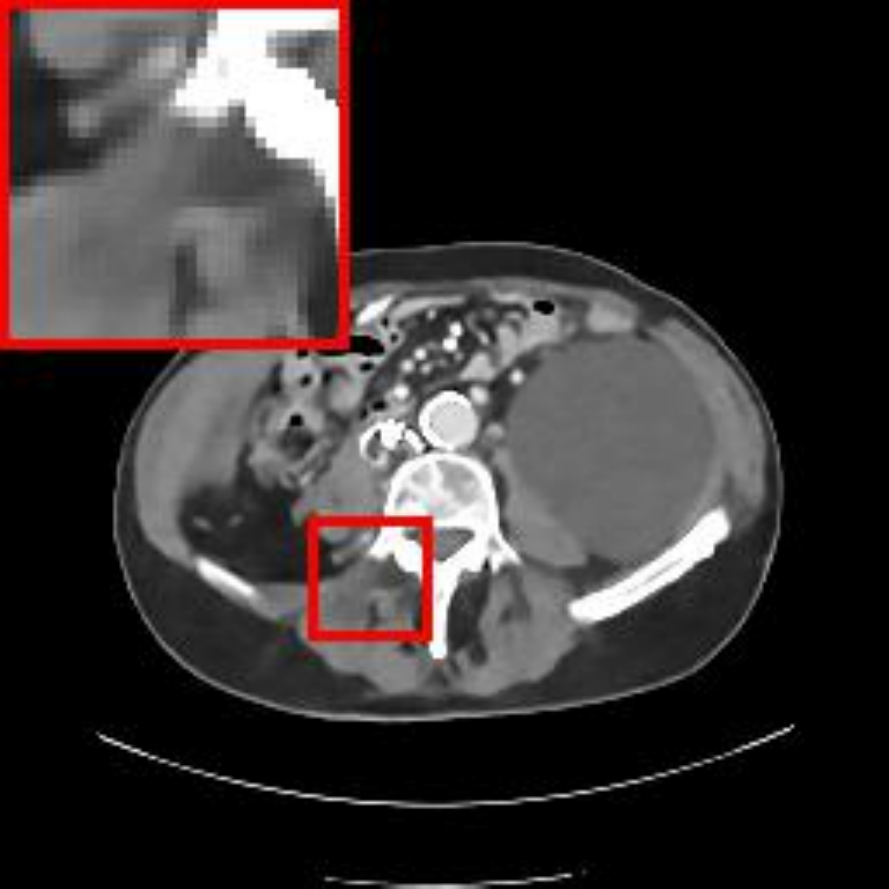}}
\subfigure[MAGIC (46.42)]{\includegraphics[width=0.1925\textwidth]{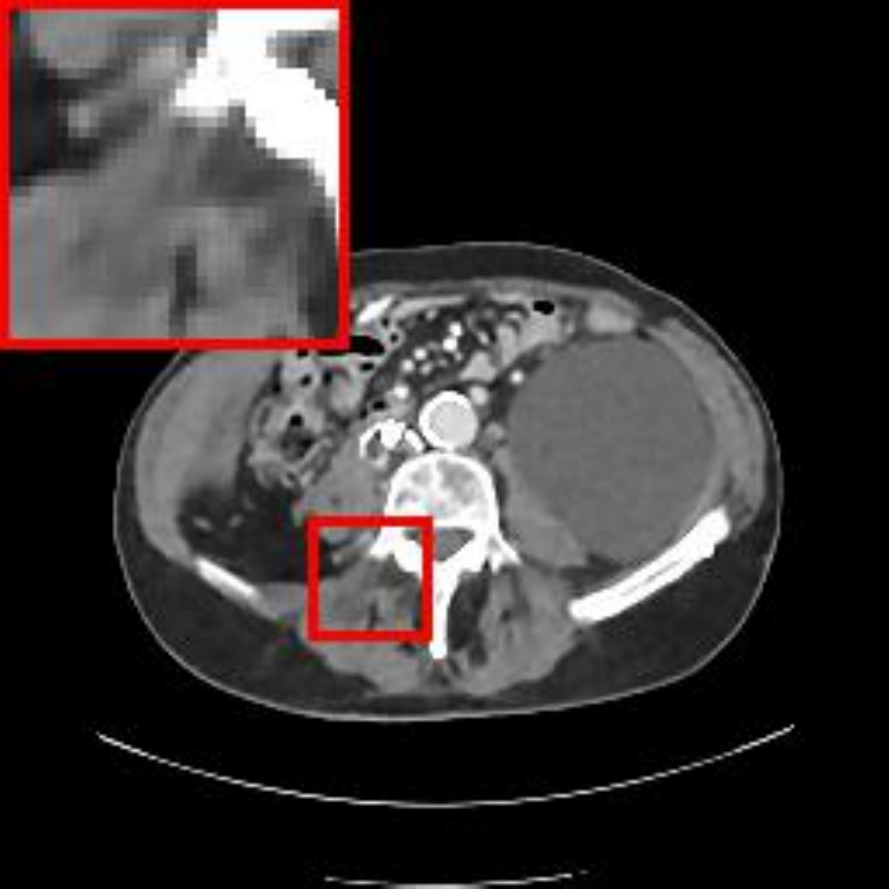}}
\subfigure[ELDA (47.14)]{\includegraphics[width=0.1925\textwidth]{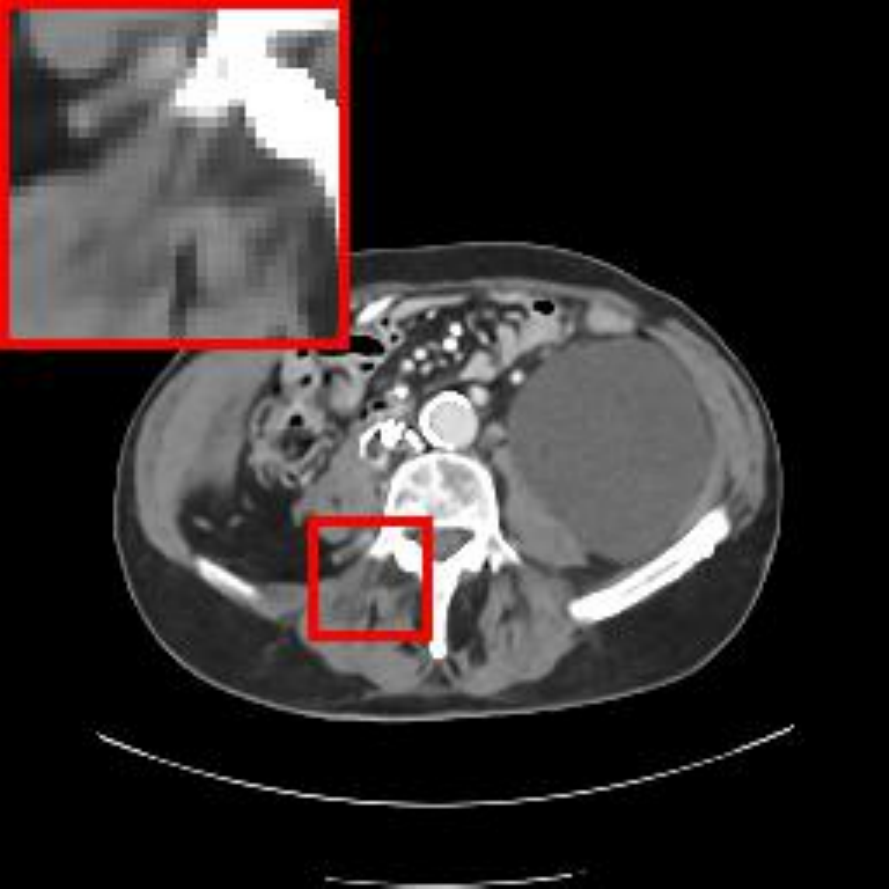}}
\caption{
Representative CT images of AAPM-Mayo data reconstructed by various methods with dose level 5\%. The display window is [-160, 240] HU. PSNRs (dB) are shown in the parentheses. The regions of interest are magnified in red boxes for better visualization.}
\label{fig:ct_rec_aapm}
\end{figure*}
\subsection{Ablation Study}
\label{sec: Ablation Study}
In this section, we first investigate the effectiveness of the proposed algorithm in ELDA. To this end, we compare ELDA with unrolling the standard gradient descent iteration of \eqref{eq:loa}, and an accelerated inertial version by setting $\xbf_{k+1} = \xbf_k - \alpha_k \nabla\phi(\xbf_k) + \theta_k (\xbf_k - \xbf_{k-1})$ where $\theta_k$ is also learned. Here these two algorithms are named as Plain-GD and AGD, respectively. In addition, to show the
superiority of the new descending condition \eqref{condition:u}\&\eqref{condition:v} over the competition strategy in LDA \cite{chen2020learnable}, we compare the result with LDA here as well. The comparison of different algorithms are shown in Table. \ref{tab:Components} (up), where the experiments follow the default parameter configuration as Section \ref{sec: Parameter Study} without the additional components listed in Table. \ref{tab:Components} (bottom). It is quite obvious that all AGD, LDA and ELDA achieve higher PNSRs than Plain-GD, where ELDA achieves the best. With the new descending condition it achieves average 0.06 dB PSNR better than LDA and about 0.1 seconds faster.
Furthermore, we have also computed the ratios that the candidate $\ukp$ is taken instead of $\vkp$, and found it to be $86.2\%$ for LDA and $99.6\%$ for ELDA respectively. That indicates that the proposed descending condition \eqref{condition:u} can effectively avoid the frequent candidate alternating compared to the competition strategy used in LDA.

Moreover, we check the influence of some essential factors/components of our ELDA model, i.e. the inexact transpose, the nonlocal smoothing regularizer and the approximated weight matrix $\mathcal{W}$. The results are summarized in Table. \ref{tab:Components} (bottom).
%The training loss and variance compared between LDA and the proposed new descending condition is shown in Fig. \ref{training_loss_variance}.
%
%We observe that the proposed new condition can effectively avoid the loss oscillation.
%
It is remarkable that the inexact transpose and the nonlocal smoothing regularizer can effectively increase the network performance by a large margin. And with the approximated weight matrix $\mathcal{W}$ there is no significant decreasing of the PSNR. As the initial $\xbf_0$ obtained by FBP is not far from $\xbf_k$ in each iteration, the $\mathcal{W}$ approximated by $\xbf_0$ can provide a good estimation to the true one. Thus, in the following sections we will keep all these features in Table \ref{tab:Components} (bottom) when comparing ELDA with other methods.
\begin{figure*}[h]
\centering
\subfigure[Reference ]{\includegraphics[width=0.1925\textwidth]{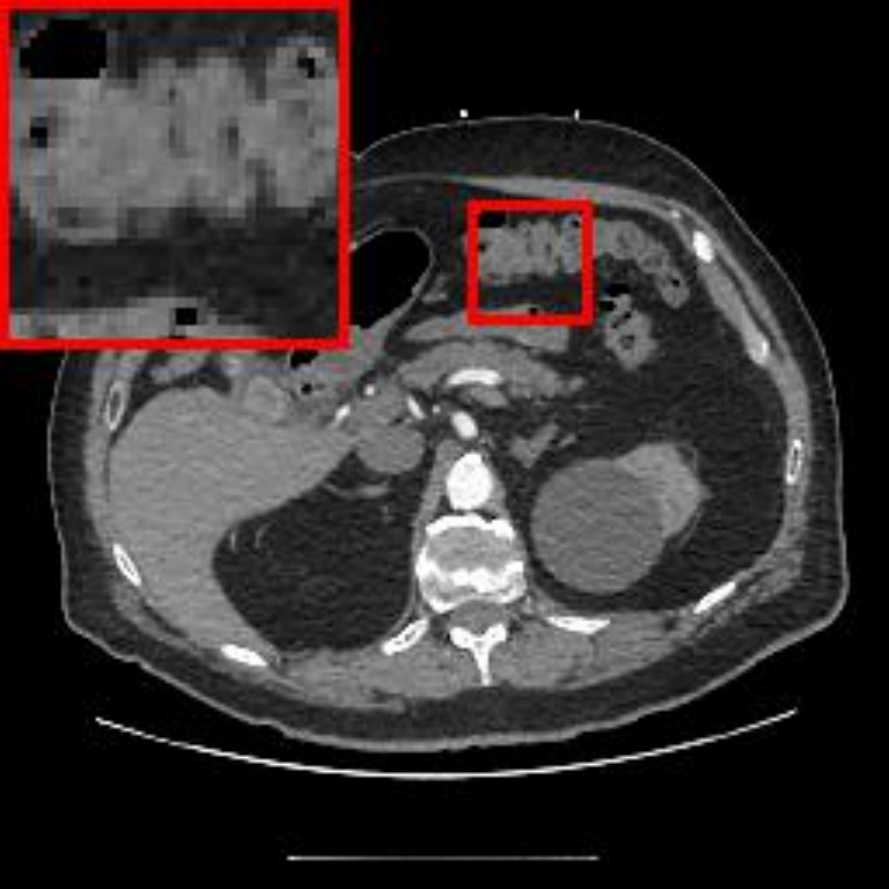}}
\subfigure[FBP (31.09)]{\includegraphics[width=0.1925\textwidth]{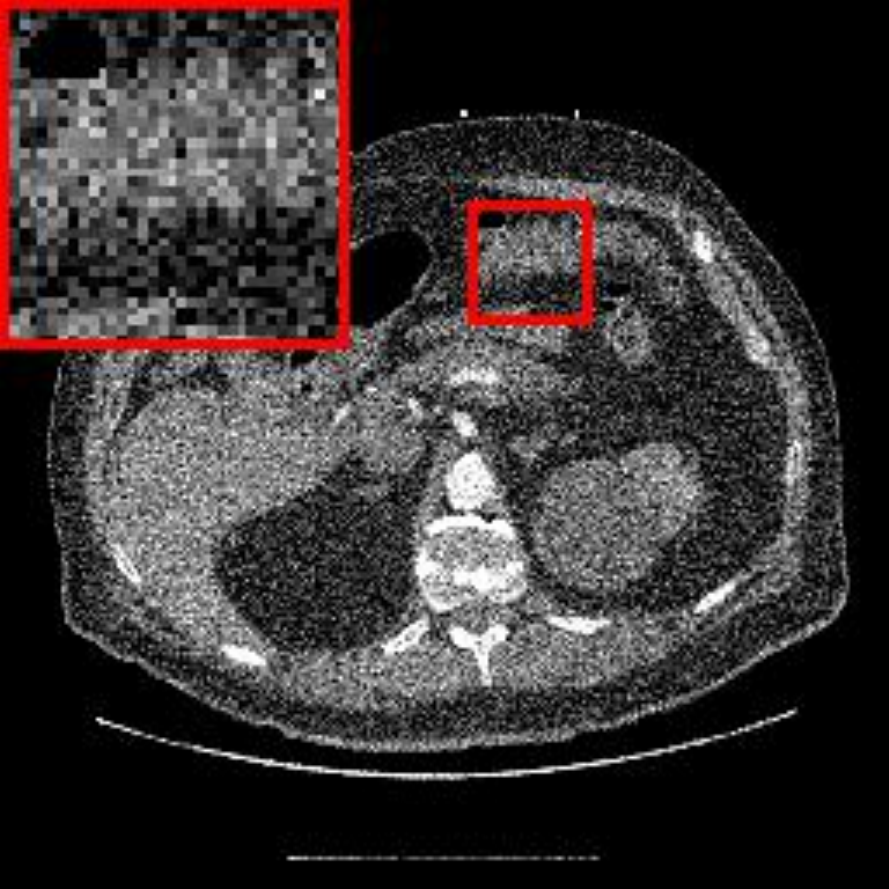}}
\subfigure[TGV (39.65)]{\includegraphics[width=0.1925\textwidth]{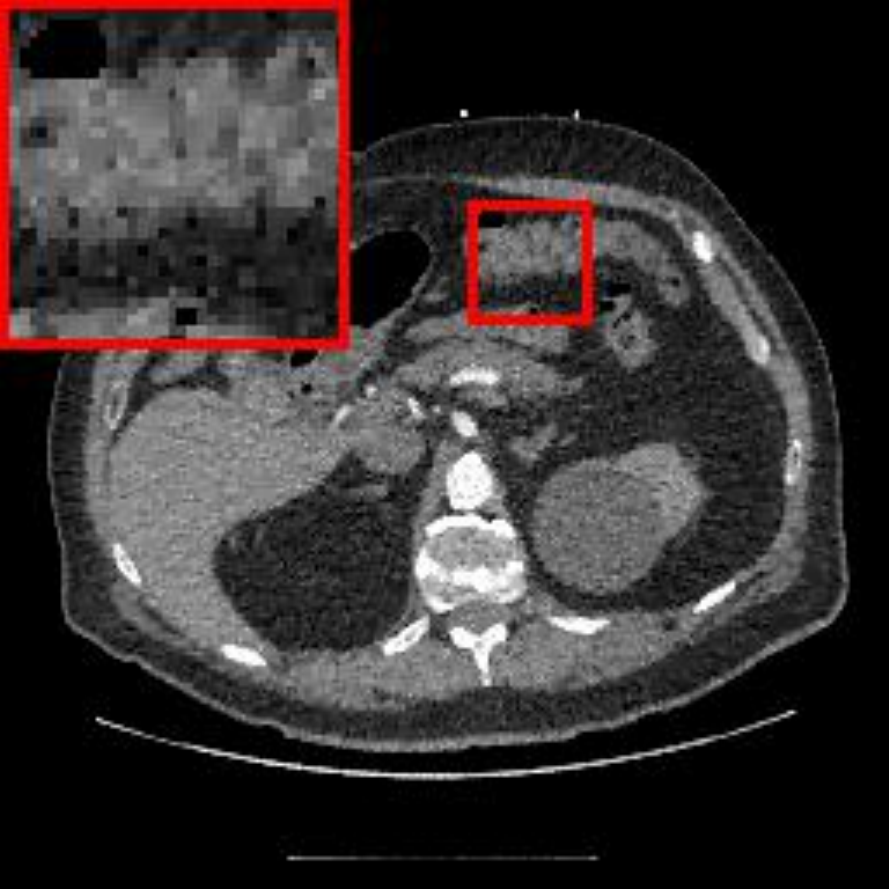}}
\subfigure[FBPConvNet (38.27)]{\includegraphics[width=0.1925\textwidth]{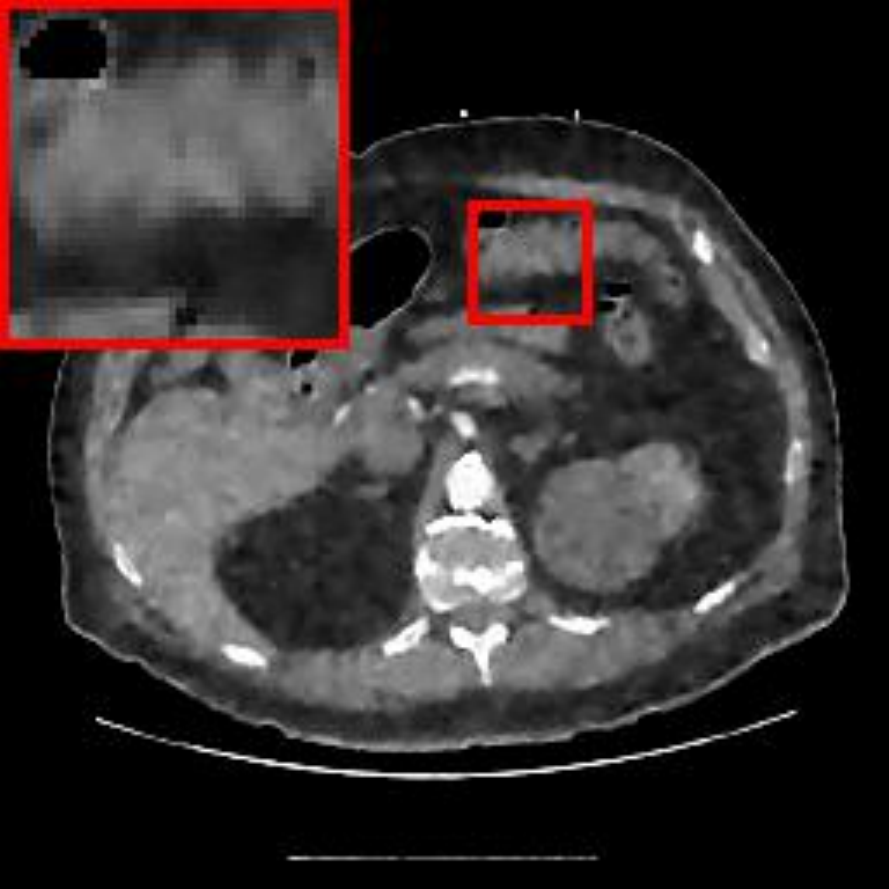}}
\subfigure[RED-CNN (39.49)]{\includegraphics[width=0.1925\textwidth]{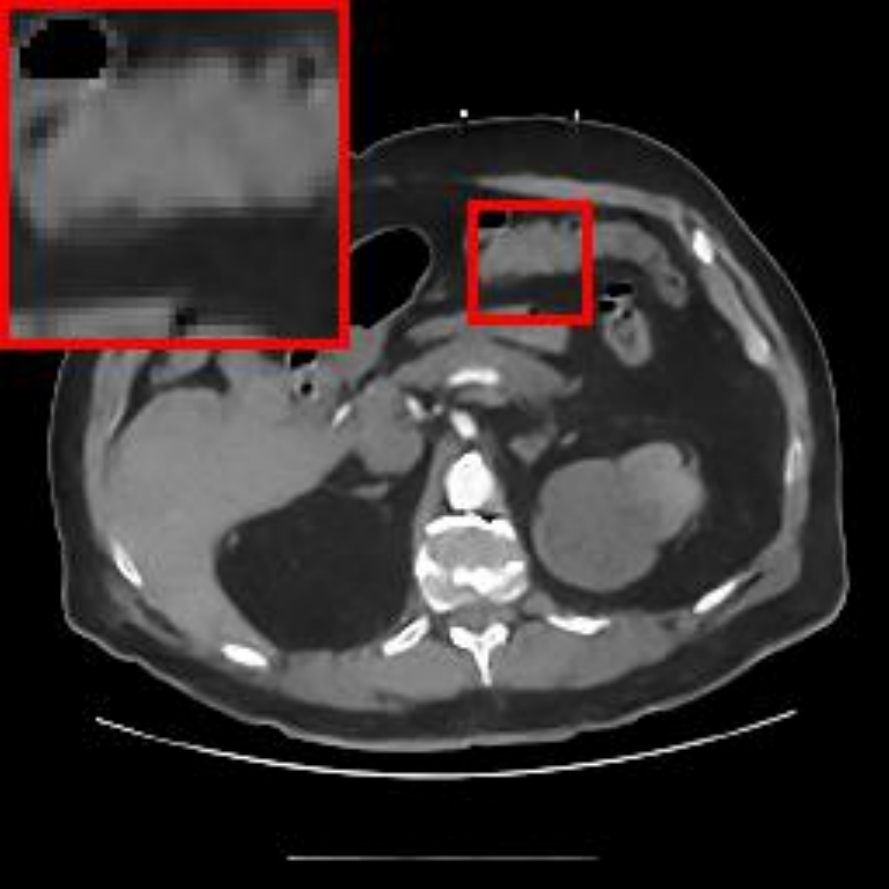}}
\subfigure[Learned PD (39.60)]{\includegraphics[width=0.1925\textwidth]{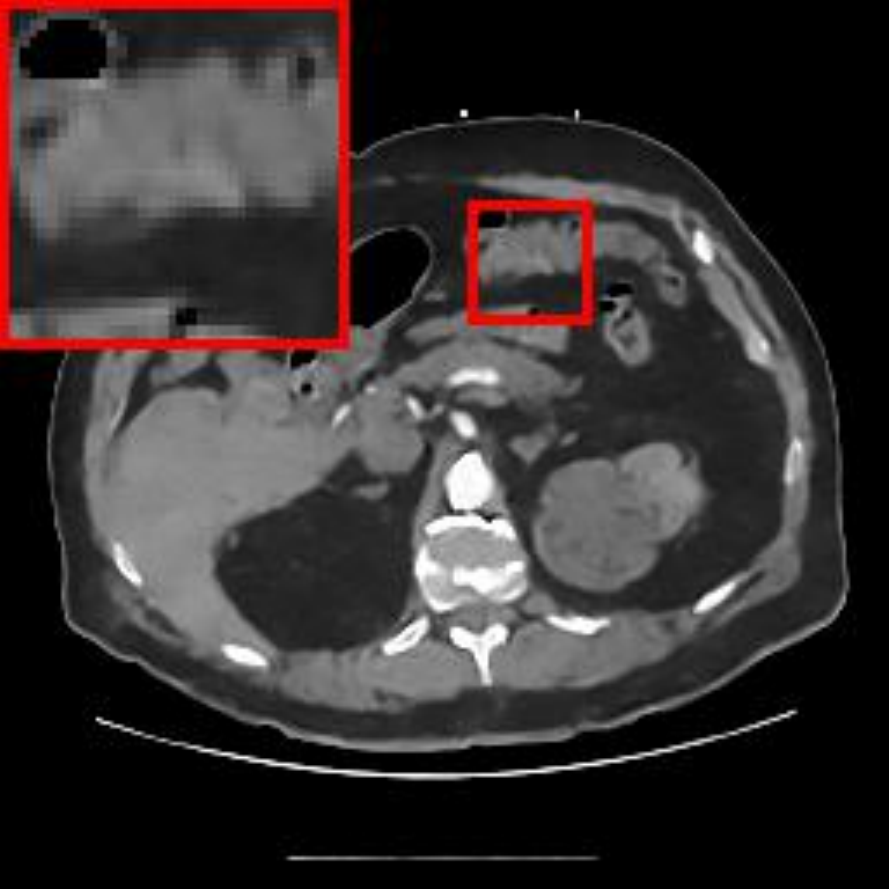}}
\subfigure[LEARN (40.11)]{\includegraphics[width=0.1925\textwidth]{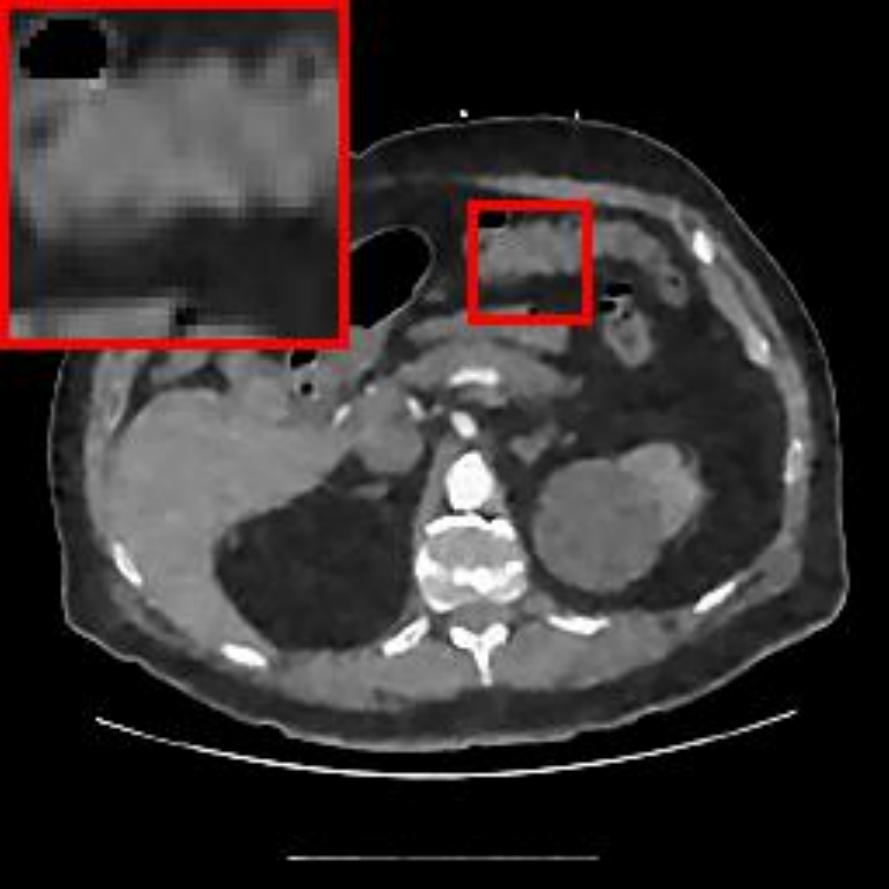}}
\subfigure[LDA (40.79)]{\includegraphics[width=0.1925\textwidth]{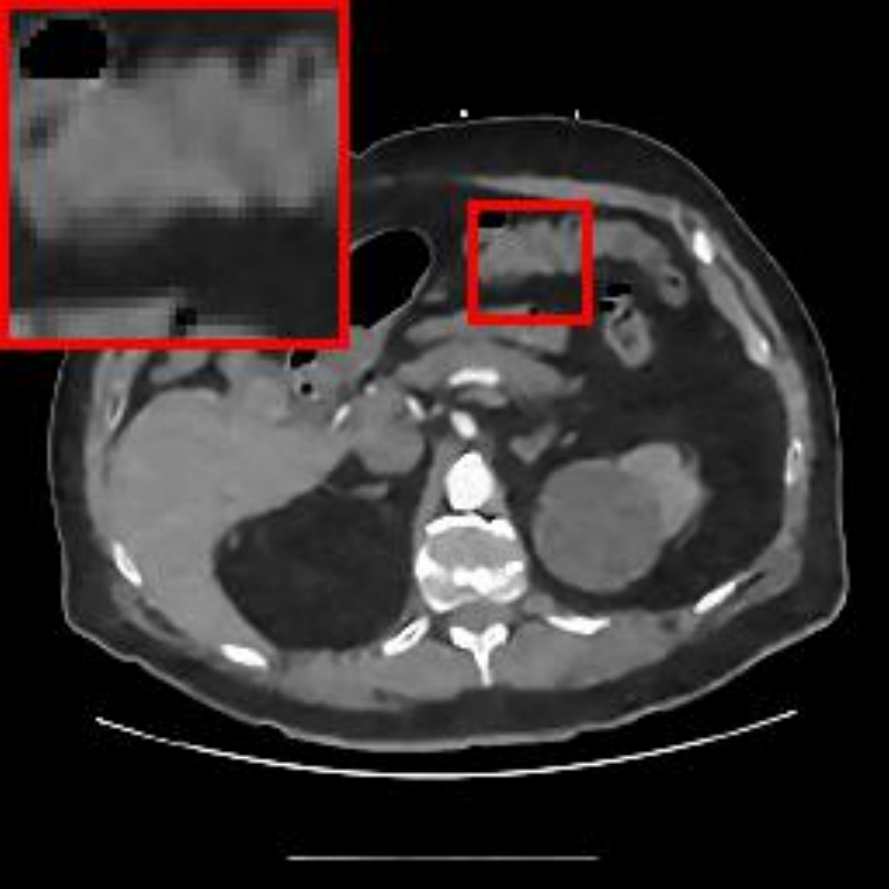}}
\subfigure[MAGIC (41.27)]{\includegraphics[width=0.1925\textwidth]{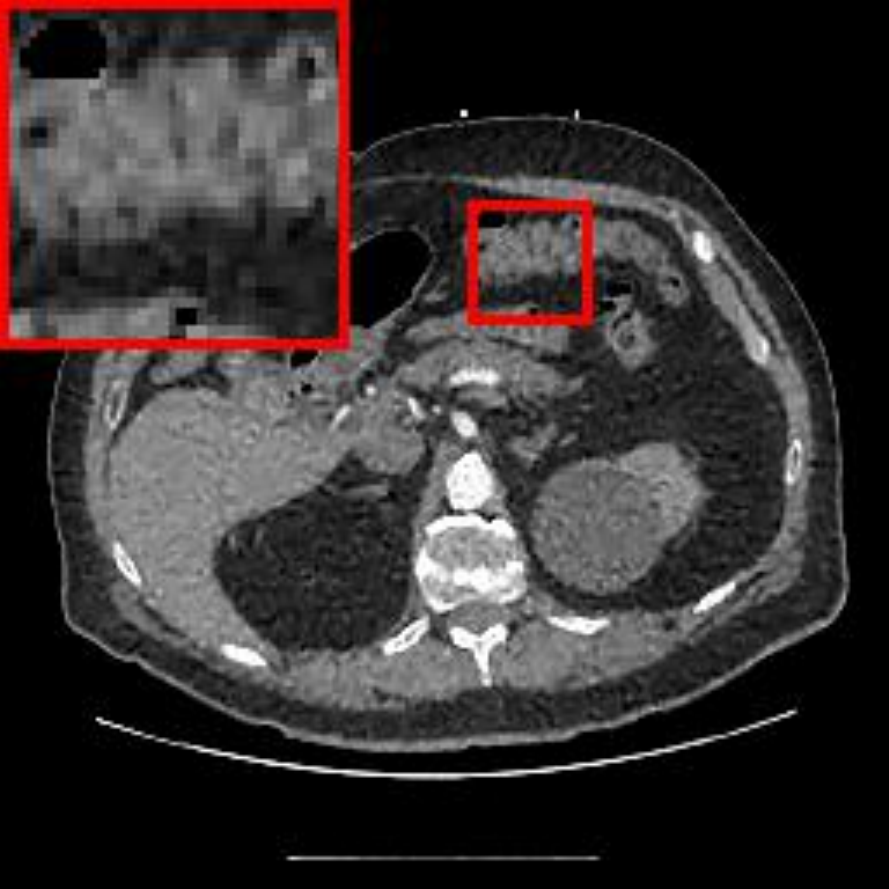}}
\subfigure[ELDA (42.41)]{\includegraphics[width=0.1925\textwidth]{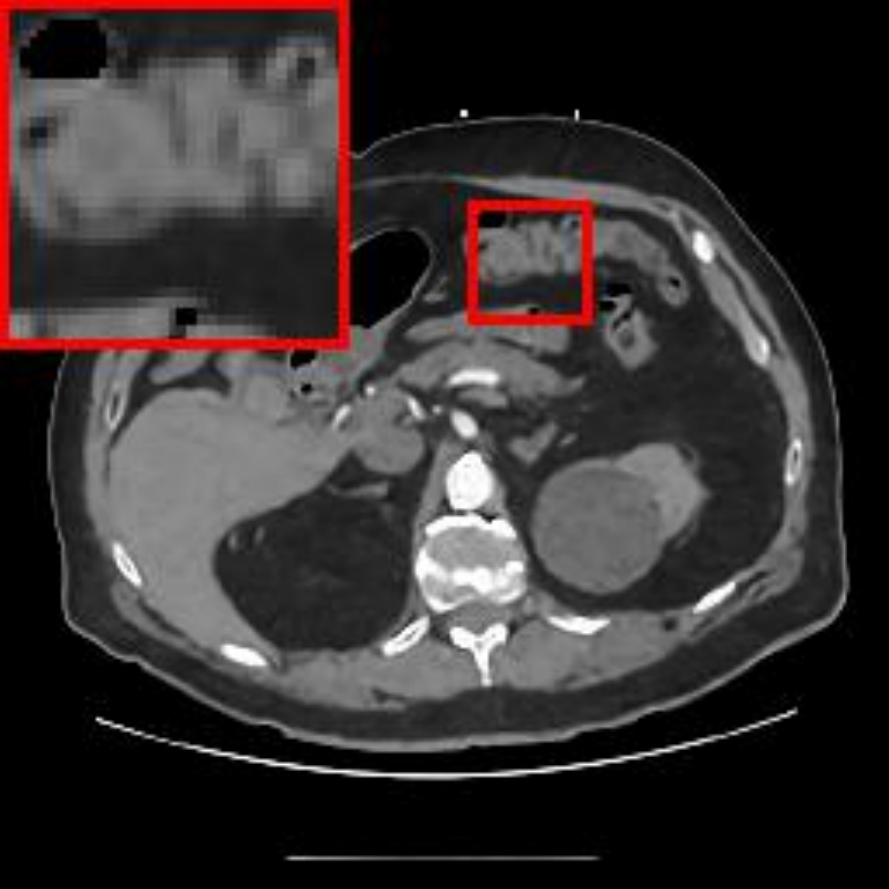}}
\caption{
Representative CT images of NBIA data reconstructed by various methods with dose level 2.5\%. The display window is [-160, 240] HU. PSNRs (dB) are shown in the parentheses. The regions of interest are magnified in red boxes for better visualization.}
\label{fig:ct_rec_}
\end{figure*}

\subsection{ Comparison with the State-of-the-art}
\label{ Comparison with the State-of-the-art}
In this section, we compare our reconstruction results on the 100 AAPM-Mayo testing images with several existing algorithms: two classic reconstruction methods, i.e., FBP \cite{kak2002principles} and TGV \cite{niu2014sparse} and six approaches based on deep learning, i.e., FBPConvNet \cite{jin2017deep}, RED-CNN \cite{CNN4}, Learned Primal-Dual \cite{adler2018learned}, LDA \cite{chen2020learnable}, LEARN \cite{chen2018learn} and MAGIC \cite{xia2020magic}. For fair comparison, all deep learning algorithms compared here are trained and evaluated on the same dataset, dose levels and evaluation metrics. The experimental results on various dose levels  are summarized in Table. \ref{www} and the representative qualitative results on dose level $5\%$ are shown in Fig. \ref{fig:ct_rec_aapm}.
These results show that ELDA reconstructs more accurate images using relatively much fewer network parameters and decent running time.
\label{sec: experiments_simulated_data}
\subsection{Validation with NBIA Data}
\label{sec: experiments_NBIA_data}
To demonstrate the generalizability of the proposed method, we validate our model on another dataset of the National Biomedical Imaging Archive (NBIA). We randomly sampled 80 images from the NBIA dataset with various parts of the human body for diversity. For fair comparison, all deep learning based reconstruction models compared here are trained on the same dataset identical to Section \ref{ Comparison with the State-of-the-art}. Fig. \ref{fig:ct_rec_} visualizes the reconstructed images obtained by different methods under dose level 2.5\%.
%The image obtained from FBP has awful noisy effect. The mothods TGV and FBPConvNet have blurry effect. And it is obviuos that RED-CNN, Learned Primal-Dual, LEARN, LDA get over-smoothed images.
It can be seen that ELDA preserves the details well, avoids over-smoothing and reduces artifacts, which gives the promising reconstruction quality in Fig. \ref{fig:ct_rec_}. The quantitative results are provided in Table. \ref{table_result_2}.
\begin{table}[H]
%\vspace{-20pt}
\centering
\caption{Quantitative Results (Mean $\pm$ Standard Deviation) of the LDCT Reconstructions Obtained by Various Algorithms and Different Dose Levels on NBIA Data.}
\label{www}
\addtolength{\tabcolsep}{-6.0pt}
\begin{tabular}{ccccccc}
\toprule
\multirow{2}{*}{\textbf{Dose Level}} & \multicolumn{2}{c}{\textbf{$1.0 \times 10^5$}} & \multicolumn{2}{c}{\textbf{$5.0 \times 10^4$}} & \multicolumn{2}{c}{\textbf{$2.5 \times 10^4$}}\\
& PSNR (dB) &  SSIM & PSNR (dB) & SSIM & PSNR (dB) & SSIM \\
\midrule
FBP& 37.36$\pm$0.43 &0.9125$\pm$0.0145 &34.61$\pm$0.51 & 0.8533$\pm$0.0240 &31.74$\pm$0.55 &0.7682$\pm$0.0339 \\
FBPConvNet & 40.86$\pm$0.25 & 0.9693$\pm$0.0026 & 39.27$\pm$0.29 & 0.9587$\pm$0.0033 & 37.69$\pm$0.38 & 0.9449$\pm$0.0036 \\
RED-CNN & 41.74$\pm$0.32 &0.9751$\pm$0.0022 &40.32$\pm$0.38 & 0.9683$\pm$0.0026 & 38.76$\pm$0.46  &0.9578$\pm$0.0029 \\
Learned PD & 41.86$\pm$0.31 &0.9760$\pm$0.0025
 &40.39$\pm$0.39 &0.9687$\pm$0.0031 & 38.89$\pm$0.43 & 0.9594$\pm$0.0033 \\
LEARN& 42.44$\pm$0.35 &0.9780$\pm$0.0021 &41.06$\pm$0.42 & 0.9738$\pm$0.0021 & 39.33$\pm$0.48 &0.9638$\pm$0.0024 \\
TGV & 42.53$\pm$0.52 & 0.9818$\pm$0.0017 & 41.44$\pm$0.32 &0.9762$\pm$0.0032 & 39.60$\pm$0.22 & 0.9601$\pm$0.0080 \\
LDA & 43.37$\pm$0.38 &0.9829$\pm$0.0017 & 41.64$\pm$0.47 &0.9763$\pm$0.0023 & 39.99$\pm$0.55 &0.9678$\pm$0.0021 \\
MAGIC & {44.58$\pm$0.37} & {0.9866$\pm$0.0017}& {43.10$\pm$0.28}& {0.9821$\pm$0.0024} & {41.12$\pm$0.20} & {0.9707$\pm$0.0058} \\
\textbf{ELDA (Proposed)} & \textbf{44.65$\pm$0.53} & \textbf{0.9867$\pm$0.0016} & \textbf{43.33$\pm$0.45} &\textbf{0.9826$\pm$0.0017} & \textbf{41.61$\pm$0.55}& \textbf{0.9763$\pm$0.0017}\\
\bottomrule
\end{tabular}
\label{table_result_2}
\end{table}
\section{Conclusion}
In brief, we propose an efficient inexact learned descent algorithm for low-dose CT reconstruction.
With incorporating the sparsity enhancing and non-local smoothing modules in the regularizer, the proposed ELDA outperforms several existing state-of-the-art reconstruction methods in accuracy and efficiency on two widely known datasets and retains convergence property.

\newpage
\bibliographystyle{IEEEtran}
\bibliography{ref}

% Generated by IEEEtran.bst, version: 1.13 (2008/09/30)
\begin{thebibliography}{10}
\providecommand{\url}[1]{#1}
\csname url@samestyle\endcsname
\providecommand{\newblock}{\relax}
\providecommand{\bibinfo}[2]{#2}
\providecommand{\BIBentrySTDinterwordspacing}{\spaceskip=0pt\relax}
\providecommand{\BIBentryALTinterwordstretchfactor}{4}
\providecommand{\BIBentryALTinterwordspacing}{\spaceskip=\fontdimen2\font plus
\BIBentryALTinterwordstretchfactor\fontdimen3\font minus
  \fontdimen4\font\relax}
\providecommand{\BIBforeignlanguage}[2]{{%
\expandafter\ifx\csname l@#1\endcsname\relax
\typeout{** WARNING: IEEEtran.bst: No hyphenation pattern has been}%
\typeout{** loaded for the language `#1'. Using the pattern for}%
\typeout{** the default language instead.}%
\else
\language=\csname l@#1\endcsname
\fi
#2}}
\providecommand{\BIBdecl}{\relax}
\BIBdecl

\bibitem{houns}
G.~N. Hounsfield, ``Computerized transverse axial scanning (tomography): Part
  1. description of system,'' \emph{The British Journal of Radiology}, vol.~46,
  no. 552, pp. 1016--1022, 1973.

\bibitem{cormack1964representation}
A.~M. Cormack, ``Representation of a function by its line integrals, with some
  radiological applications. ii,'' \emph{Journal of Applied Physics}, vol.~35,
  no.~10, pp. 2908--2913, 1964.

\bibitem{cancer1}
D.~J. Brenner, C.~D. Elliston, E.~J. Hall, and W.~E. Berdon, ``Estimated risks
  of radiation-induced fatal cancer from pediatric ct,'' \emph{American journal
  of roentgenology}, vol. 176, no.~2, pp. 289--296, 2001.

\bibitem{cancer2}
A.~S. Brody, D.~P. Frush, W.~Huda, R.~L. Brent \emph{et~al.}, ``Radiation risk
  to children from computed tomography,'' \emph{Pediatrics}, vol. 120, no.~3,
  pp. 677--682, 2007.

\bibitem{cancer3}
N.~Saltybaeva, K.~Martini, T.~Frauenfelder, and H.~Alkadhi, ``Organ dose and
  attributable cancer risk in lung cancer screening with low-dose computed
  tomography,'' \emph{PloS one}, vol.~11, no.~5, p. e0155722, 2016.

\bibitem{dose-reduction1}
A.~M. den Harder \emph{et~al.}, ``Radiation dose reduction in pediatric great
  vessel stent computed tomography using iterative reconstruction: A phantom
  study,'' \emph{PloS one}, vol.~12, no.~4, p. e0175714, 2017.

\bibitem{dose-reduction2}
A.~Sauter \emph{et~al.}, ``Ultra low dose ct pulmonary angiography with
  iterative reconstruction,'' \emph{PLoS One}, vol.~11, no.~9, p. e0162716,
  2016.

\bibitem{dose-reduction3}
E.~J. Keller \emph{et~al.}, ``Reinforcing the importance and feasibility of
  implementing a low-dose protocol for ct-guided biopsies,'' \emph{Academic
  radiology}, vol.~25, no.~9, pp. 1146--1151, 2018.

\bibitem{hsieh2019performance}
C.-J. Hsieh, S.-C. Jin, J.-C. Chen, C.-W. Kuo, R.-T. Wang, and W.-C. Chu,
  ``Performance of sparse-view ct reconstruction with multi-directional
  gradient operators,'' \emph{PloS one}, vol.~14, no.~1, p. e0209674, 2019.

\bibitem{sino-domain1}
H.~Lu, T.~li, and Z.~Liang, ``Sinogram noise reduction for low-dose ct by
  statistics-based nonlinear filters,'' \emph{Proceedings of SPIE - The
  International Society for Optical Engineering}, vol. 5747, 04 2005.

\bibitem{sino-domain2}
A.~Manduca \emph{et~al.}, ``\BIBforeignlanguage{English (US)}{Projection space
  denoising with bilateral filtering and ct noise modeling for dose reduction
  in ct},'' \emph{\BIBforeignlanguage{English (US)}{Medical Physics}}, vol.~36,
  no.~11, pp. 4911--4919, 2009.

\bibitem{sino-domain3}
{Jing Wang}, {Tianfang Li}, {Hongbing Lu}, and {Zhengrong Liang}, ``Penalized
  weighted least-squares approach to sinogram noise reduction and image
  reconstruction for low-dose x-ray computed tomography,'' \emph{IEEE
  Transactions on Medical Imaging}, vol.~25, no.~10, pp. 1272--1283, 2006.

\bibitem{image_domain-tipnis2010iterative}
S.~Tipnis \emph{et~al.}, ``Iterative reconstruction in image space (iris) and
  lesion detection in abdominal ct,'' in \emph{Medical Imaging 2010: Physics of
  Medical Imaging}, vol. 7622.\hskip 1em plus 0.5em minus 0.4em\relax
  International Society for Optics and Photonics, 2010, p. 76222K.

\bibitem{iterative-1}
L.~L. Geyer \emph{et~al.}, ``State of the art: iterative ct reconstruction
  techniques,'' \emph{Radiology}, vol. 276, no.~2, pp. 339--357, 2015.

\bibitem{iterative-2}
M.~J. Willemink \emph{et~al.}, ``Iterative reconstruction techniques for
  computed tomography part 2: initial results in dose reduction and image
  quality,'' \emph{European radiology}, vol.~23, no.~6, pp. 1632--1642, 2013.

\bibitem{zheng2018pwls}
X.~Zheng, S.~Ravishankar, Y.~Long, and J.~A. Fessler, ``Pwls-ultra: An
  efficient clustering and learning-based approach for low-dose 3d ct image
  reconstruction,'' \emph{IEEE transactions on medical imaging}, vol.~37,
  no.~6, pp. 1498--1510, 2018.

\bibitem{shan2019competitive}
H.~Shan \emph{et~al.}, ``Competitive performance of a modularized deep neural
  network compared to commercial algorithms for low-dose ct image
  reconstruction,'' \emph{Nature Machine Intelligence}, vol.~1, no.~6, pp.
  269--276, 2019.

\bibitem{CNN1}
K.~Liang, H.~Yang, K.~Kang, and Y.~Xing, ``Improve angular resolution for
  sparse-view ct with residual convolutional neural network,'' in \emph{Medical
  Imaging 2018: Physics of Medical Imaging}, vol. 10573.\hskip 1em plus 0.5em
  minus 0.4em\relax International Society for Optics and Photonics, 2018, p.
  105731K.

\bibitem{CNN2}
Y.~Han and J.~C. Ye, ``Framing u-net via deep convolutional framelets:
  Application to sparse-view ct,'' \emph{IEEE transactions on medical imaging},
  vol.~37, no.~6, pp. 1418--1429, 2018.

\bibitem{CNN3}
H.~Chen \emph{et~al.}, ``Low-dose ct via convolutional neural network,''
  \emph{Biomedical optics express}, vol.~8, no.~2, pp. 679--694, 2017.

\bibitem{xie2020artifact}
S.~{Xie} and T.~{Yang}, ``Artifact removal in sparse-angle ct based on feature
  fusion residual network,'' \emph{IEEE Transactions on Radiation and Plasma
  Medical Sciences}, vol.~5, no.~2, pp. 261--271, 2021.

\bibitem{zhang2018sparse}
Z.~Zhang, X.~Liang, X.~Dong, Y.~Xie, and G.~Cao, ``A sparse-view ct
  reconstruction method based on combination of densenet and deconvolution,''
  \emph{IEEE transactions on medical imaging}, vol.~37, no.~6, pp. 1407--1417,
  2018.

\bibitem{hu2020hybrid}
D.~{Hu} \emph{et~al.}, ``Hybrid-domain neural network processing for
  sparse-view ct reconstruction,'' \emph{IEEE Transactions on Radiation and
  Plasma Medical Sciences}, vol.~5, no.~1, pp. 88--98, 2021.

\bibitem{jin2017deep}
K.~H. Jin, M.~T. McCann, E.~Froustey, and M.~Unser, ``Deep convolutional neural
  network for inverse problems in imaging,'' \emph{IEEE Transactions on Image
  Processing}, vol.~26, no.~9, pp. 4509--4522, 2017.

\bibitem{lee2018deep}
H.~Lee, J.~Lee, H.~Kim, B.~Cho, and S.~Cho, ``Deep-neural-network-based
  sinogram synthesis for sparse-view ct image reconstruction,'' \emph{IEEE
  Transactions on Radiation and Plasma Medical Sciences}, vol.~3, no.~2, pp.
  109--119, 2018.

\bibitem{kang2018deep}
E.~Kang, W.~Chang, J.~Yoo, and J.~C. Ye, ``Deep convolutional framelet denosing
  for low-dose ct via wavelet residual network,'' \emph{IEEE transactions on
  medical imaging}, vol.~37, no.~6, pp. 1358--1369, 2018.

\bibitem{yang2018low}
Q.~Yang \emph{et~al.}, ``Low-dose ct image denoising using a generative
  adversarial network with wasserstein distance and perceptual loss,''
  \emph{IEEE transactions on medical imaging}, vol.~37, no.~6, pp. 1348--1357,
  2018.

\bibitem{proj_domain-lee2017view}
H.~Lee, J.~Lee, and S.~Cho, ``View-interpolation of sparsely sampled sinogram
  using convolutional neural network,'' in \emph{Medical Imaging 2017: Image
  Processing}, vol. 10133.\hskip 1em plus 0.5em minus 0.4em\relax International
  Society for Optics and Photonics, 2017, p. 1013328.

\bibitem{CNN4}
H.~Chen \emph{et~al.}, ``Low-dose ct with a residual encoder-decoder
  convolutional neural network,'' \emph{IEEE transactions on medical imaging},
  vol.~36, no.~12, pp. 2524--2535, 2017.

\bibitem{kang2017deep}
E.~Kang, J.~Min, and J.~C. Ye, ``A deep convolutional neural network using
  directional wavelets for low-dose x-ray ct reconstruction,'' \emph{Medical
  physics}, vol.~44, no.~10, pp. e360--e375, 2017.

\bibitem{ye2019spultra}
S.~Ye, S.~Ravishankar, Y.~Long, and J.~A. Fessler, ``Spultra: Low-dose ct image
  reconstruction with joint statistical and learned image models,'' \emph{IEEE
  Transactions on Medical Imaging}, vol.~39, no.~3, pp. 729--741, 2019.

\bibitem{chen2018learn}
H.~Chen \emph{et~al.}, ``Learn: Learned experts’ assessment-based
  reconstruction network for sparse-data ct,'' \emph{IEEE transactions on
  medical imaging}, vol.~37, no.~6, pp. 1333--1347, 2018.

\bibitem{wu2017iterative}
D.~Wu, K.~Kim, G.~El~Fakhri, and Q.~Li, ``Iterative low-dose ct reconstruction
  with priors trained by artificial neural network,'' \emph{IEEE transactions
  on medical imaging}, vol.~36, no.~12, pp. 2479--2486, 2017.

\bibitem{chen2020learnable}
Y.~Chen, H.~Liu, X.~Ye, and Q.~Zhang, ``Learnable descent algorithm for
  nonsmooth nonconvex image reconstruction,'' \emph{arXiv preprint
  arXiv:2007.11245}, 2020.

\bibitem{ronneberger2015u}
O.~Ronneberger, P.~Fischer, and T.~Brox, ``U-net: Convolutional networks for
  biomedical image segmentation,'' in \emph{International Conference on Medical
  image computing and computer-assisted intervention}.\hskip 1em plus 0.5em
  minus 0.4em\relax Springer, 2015, pp. 234--241.

\bibitem{chun2017convolutional}
I.~Y. Chun and J.~A. Fessler, ``Convolutional dictionary learning: Acceleration
  and convergence,'' \emph{IEEE Transactions on Image Processing}, vol.~27,
  no.~4, pp. 1697--1712, 2017.

\bibitem{chun2019convolutional}
------, ``Convolutional analysis operator learning: Acceleration and
  convergence,'' \emph{IEEE Transactions on Image Processing}, vol.~29, no.~1,
  pp. 2108--2122, 2019.

\bibitem{sun2016deep}
J.~Sun, H.~Li, Z.~Xu \emph{et~al.}, ``Deep admm-net for compressive sensing
  mri,'' in \emph{Advances in neural information processing systems}, 2016, pp.
  10--18.

\bibitem{chun2019bcd}
I.~Y. Chun, X.~Zheng, Y.~Long, and J.~A. Fessler, ``Bcd-net for low-dose ct
  reconstruction: Acceleration, convergence, and generalization,'' in
  \emph{International Conference on Medical Image Computing and
  Computer-Assisted Intervention}.\hskip 1em plus 0.5em minus 0.4em\relax
  Springer, 2019, pp. 31--40.

\bibitem{chun2019momentum}
I.~Y. Chun, Z.~Huang, H.~Lim, and J.~Fessler, ``Momentum-net: Fast and
  convergent iterative neural network for inverse problems,'' \emph{IEEE
  Transactions on Pattern Analysis and Machine Intelligence}, 2020, pp. DOI:
  10.1109/tpami.2020.3012955.

\bibitem{ye2020momentum}
S.~Ye, Y.~Long, and I.~Y. Chun, ``Momentum-net for low-dose ct image
  reconstruction,'' \emph{arXiv preprint arXiv:2002.12018}, 2020.

\bibitem{adler2018learned}
J.~Adler and O.~{\"O}ktem, ``Learned primal-dual reconstruction,'' \emph{IEEE
  transactions on medical imaging}, vol.~37, no.~6, pp. 1322--1332, 2018.

\bibitem{xia2020magic}
W.~Xia \emph{et~al.}, ``Magic: Manifold and graph integrative convolutional
  network for low-dose ct reconstruction,'' \emph{arXiv preprint
  arXiv:2008.00406}, 2020.

\bibitem{dumoulin2016guide}
V.~Dumoulin and F.~Visin, ``A guide to convolution arithmetic for deep
  learning,'' \emph{arXiv preprint arXiv:1603.07285}, 2016.

\bibitem{nesterov2005smooth}
Y.~Nesterov, ``Smooth minimization of non-smooth functions,''
  \emph{Mathematical programming}, vol. 103, no.~1, pp. 127--152, 2005.

\bibitem{LeB17}
H.~Le and A.~Borji, ``What are the receptive, effective receptive, and
  projective fields of neurons in convolutional neural networks?'' \emph{CoRR},
  vol. abs/1705.07049, 2017.

\bibitem{6789755}
M.~{Belkin} and P.~{Niyogi}, ``Laplacian eigenmaps for dimensionality reduction
  and data representation,'' \emph{Neural Computation}, vol.~15, no.~6, pp.
  1373--1396, 2003.

\bibitem{DBLP:journals/corr/HeZR016}
K.~He, X.~Zhang, S.~Ren, and J.~Sun, ``Identity mappings in deep residual
  networks,'' in \emph{European Conference on Computer Vision}.\hskip 1em plus
  0.5em minus 0.4em\relax Springer International Publishing, 2016, pp.
  630--645.

\bibitem{ResNet}
K.~{He}, X.~{Zhang}, S.~{Ren}, and J.~{Sun}, ``Deep residual learning for image
  recognition,'' in \emph{2016 IEEE Conference on Computer Vision and Pattern
  Recognition (CVPR)}, 2016, pp. 770--778.

\bibitem{Clarke83}
F.~H. Clarke, \emph{Optimization and nonsmooth analysis}.\hskip 1em plus 0.5em
  minus 0.4em\relax Siam, 1990, vol.~5.

\bibitem{NEURIPS2019_9015}
A.~Paszke \emph{et~al.}, ``Pytorch: An imperative style, high-performance deep
  learning library,'' in \emph{Advances in Neural Information Processing
  Systems 32}.\hskip 1em plus 0.5em minus 0.4em\relax Curran Associates, Inc.,
  2019, pp. 8024--8035.

\bibitem{Glorot10understandingthe}
X.~Glorot and Y.~Bengio, ``Understanding the difficulty of training deep
  feedforward neural networks,'' in \emph{In Proceedings of the International
  Conference on Artificial Intelligence and Statistics. Society for Artificial
  Intelligence and Statistics}, 2010.

\bibitem{AAPM}
C.~McCollough, ``Tu-fg-207a-04: Overview of the low dose ct grand challenge,''
  \emph{Medical Physics}, vol.~43, pp. 3759--3760, 06 2016.

\bibitem{de2002distance}
B.~De~Man and S.~Basu, ``Distance-driven projection and backprojection,'' in
  \emph{2002 IEEE Nuclear Science Symposium Conference Record}, vol.~3.\hskip
  1em plus 0.5em minus 0.4em\relax IEEE, 2002, pp. 1477--1480.

\bibitem{de2004distance}
------, ``Distance-driven projection and backprojection in three dimensions,''
  \emph{Physics in Medicine \& Biology}, vol.~49, no.~11, p. 2463, 2004.

\bibitem{possion_noise_and_gussian}
T.~li, H.~Lu, and Z.~Liang, ``Penalized weighted least-squares approach to
  sinogram noise reduction and image reconstruction for low-dose x-ray computed
  tomography,'' \emph{IEEE transactions on medical imaging}, vol.~25, pp.
  1272--83, 11 2006.

\bibitem{kak2002principles}
A.~C. Kak, M.~Slaney, and G.~Wang, ``Principles of computerized tomographic
  imaging,'' \emph{Medical Physics}, vol.~29, no.~1, p. 107, 2002.

\bibitem{niu2014sparse}
S.~Niu \emph{et~al.}, ``Sparse-view x-ray ct reconstruction via total
  generalized variation regularization,'' \emph{Physics in Medicine \&
  Biology}, vol.~59, no.~12, p. 2997, 2014.

\end{thebibliography}

\end{document}